\documentstyle{llncs}
\begin{document}

\title{Extensions of the Minimum Cost Homomorphism Problem}

\author{Rustem Takhanov\inst{1}}

\institute{Department of Computer and Information Science, Link{\"o}ping University,\\
SE-581 83 Link{\"o}ping, Sweden\\
takhanov@mail.ru \\}

\maketitle
\sloppy
\begin{abstract}
 Assume $D$ is a finite set and $R$ is a finite set of
functions from $D$ to the natural numbers.
An instance of the
minimum $R$-cost homomorphism problem ($MinHom_R$)
is a set of variables $V$ subject to specified constraints
together with a positive weight $c_{vr}$ for each combination of
$v \in V$ and $r \in R$. The aim is to find a function
$f:V \rightarrow D$ such that $f$ satisfies all constraints
and $\sum_{v \in V} \sum_{r \in R} c_{vr}r(f(v))$ is minimized.

This problem unifies well-known optimization problems
such as the minimum cost homomorphism problem and the
maximum solution problem, and this makes it a computationally
interesting fragment of the valued CSP framework for
optimization problems.
We parameterize $MinHom_R\left(\Gamma\right)$ by
{\em constraint languages}, i.e. sets $\Gamma$ of
relations that are allowed in constraints.
A constraint language is called {\em conservative} if every
unary relation is a member of it; such constraint languages
play an important role in understanding the structure of
constraint problems.
The dichotomy conjecture for $MinHom_R$ is the following statement: if
$\Gamma$ is a constraint language, then
$MinHom_R\left(\Gamma\right)$ is either polynomial-time solvable or
NP-complete.
For $MinHom$ the dichotomy result has been recently obtained [Takhanov, STACS, 2010]
and the goal of this paper is to expand this result to the case of $MinHom_R$ with conservative constraint language.
For arbitrary $R$ this problem is still open, but assuming certain restrictions on $R$ we prove a dichotomy.
As a consequence of this result we obtain a dichotomy for the conservative maximum solution problem.
\end{abstract}
\section{Introduction}
Constraint satisfaction problems ($CSP$) and
valued constraint satisfaction problems ($VCSP$) are natural ways of formalizing a large number of
computational problems arising in combinatorial optimization, artificial intelligence,
and database theory. $CSP$ has the following two equivalent formulations:
(1) to find an assignment of values to a given set of variables, subject to constraints on the values that can be
assigned simultaneously to specified subsets of variables, and
(2) to find a homomorphism between two finite relational structures $A$ and $B$.
$VCSP$ is a ``soft'' version of $CSP$ where constraint relations are replaced by functions
from set of tuples to some totally ordered set with addition operation (for example, rational numbers).
A solution is defined as an assignment to variables that maximize a functional which is equal to
a sum of constraint functions applied to corresponding variables.
Applications of $CSP$s and $VCSP$s arise in the propositional logic, database and graph theory, scheduling, biology and many other areas.
$CSP$ and its subproblems has been intensively studied by computer scientists and mathematicians since the 70s, and
recently attention has been paid to its modifications such as $VCSP$.
Considerable attention has been given to the case where the
constraints are restricted to a given finite set of relations $\Gamma$, called a constraint language \cite{bulatov,feder,jeavons,schaefer}.
For example, when $\Gamma$ is a constraint language over the boolean set $\{0, 1\}$ with four ternary predicates $x\vee y\vee z$,
$\overline{x}\vee y\vee z$, $\overline{x}\vee \overline{y}\vee z$,
$\overline{x}\vee \overline{y}\vee \overline{z}$ we obtain 3-SAT.
For every constraint language $\Gamma$, it has been conjectured that $CSP\left( \Gamma \right)$ is either in P or NP-complete \cite{feder}.
An analogous situation appears in $VCSP$ where the constraint language is defined as a set of ``soft'' predicates.

We believe that problems like minimum cost homomorphism problem ($MinHom$) has an intermediate position between $CSP$s and $VCSP$s which
makes their structure important for understanding the relationship between ``hard'' and ``soft'' constraints in optimization.
In the minimum cost homomorphism problem, we are given variables subject to constraints and, additionally, costs on variable/value pairs.
Now, the task is not just to find any satisfying assignment to the variables, but one that minimizes the total cost.
In the context of $VCSP$ this is equivalent to addition of ``soft'' constraints equal to characteristic functions of one element sets. We will consider a weighted version of this problem.

\begin{definition}
Suppose we are given a finite domain set $A$, a finite constraint language $\Gamma \subseteq \bigcup\limits_{k = 1}^\infty  {{2^{{A^k}}}} $ and a finite set of functions $R \subseteq \left\{r:A\rightarrow \bbbn\right\}$.
Denote by
$MinHom_{R}\left( \Gamma \right)$
the following minimization task:

\noindent {\bf Instance:} A triple $\left(V, {\sf C}, W\right)$ where
\begin{itemize}
\item $V$ is a set of variables;
\item ${\sf C}$ is a set of constraints, where each constraint $C \in {\sf C}$ is a pair $(s,\rho)$,
such that
\begin{itemize}
\item $s = \left(v_1,\dots, v_m\right)$ is a tuple of variables of length $m$, called the
constraint scope;
\item $\rho$ is an element of $\Gamma$ with arity $m$,
called the constraint relation.
\end{itemize}
\item Weights $w_{vr}\in \bbbn, v\in V, r\in R$.
\end{itemize}

\noindent {\bf Solution:} A function $f$ from $V$ to $A$,
such that, for each variable $v\in V$, $f(v)\in A$, and for each constraint
$(s,\rho)\in {\sf C}$, with $s = \left(v_1,\dots, v_m\right)$, the tuple $\left(f(v_1),\dots,f(v_m)\right)$ belongs to $\rho$.

\noindent {\bf Measure:} $\sum\limits_{v\in V} \sum\limits_{r\in R}{w_{vr}r\left( f\left( v \right) \right)}$.
\end{definition}

\begin{definition}
For $R^{*} = \{e_i|i\in A\}$, $MinHom_{R^{*}}\left( \Gamma \right)$ is called {\em minimum cost homomorphism problem}
where $e_i: A\rightarrow \bbbn$ denotes a characteristic function of $\{i\}\subseteq A$.
\end{definition}

We will write $MinHom$ instead of $MinHom_{R^{*}}$ for short. $MinHom$ has applications in defence logistics \cite{gutin2} and machine learning \cite{daniels}.
Complete classification of constraint languages $\Gamma$ for which
$MinHom\left( \Gamma \right)$ is polynomial-time solvable has recently been obtained in \cite{takhan}.
The question for which directed graphs $H$ the problem
$MinHom\left( \left\{ H \right\} \right)$ is polynomial-time solvable was considered in \cite{gupta,gutin0,gutin1,gutin2,gutin3}.
Maximum Solution Problem ($MaxSol$), which is defined analogously to $MinHom$, but with $A\subseteq  \bbbn$ and a functional of the
form $\sum\limits_{v\in V} {w_{v}f\left( v \right)}$ to maximize, was investigated in a series of papers \cite{jonnson0,jonnson1,jonnson2}.
It is easy to see that if $n = \max\limits_{s\in A} s + 1$ and $R = \left\{n-x\right\}$, then $MinHom_{R}\left( \Gamma \right) = MaxSol\left( \Gamma \right)$.
In this paper, we will assume that a constraint language $\Gamma$ contains all unary predicates
over a domain set $A$ and approach the problem of characterizing the complexity of $MinHom_{R}\left( \Gamma \right)$
in its most general form by algebraic methods.
When $R$ satisfies certain conditions, we obtain a dichotomy for $MinHom_{R}\left( \Gamma \right)$,
i.e., if $MinHom_{R}\left( \Gamma \right)$ is not polynomial-time solvable, then it is NP-hard. As a consequence, we
obtain a dichotomy for conservative $MaxSol$.

In Section 2, we present some preliminaries together with results connecting the complexity of $MinHom_R$ with conservative algebras. The main dichotomy theorem is stated in Section 3 and its proof is divided into several parts which can be found in Sections 4-6. Finally, in Section 7 we present directions for future research.

\section{Algebraic structure of tractable constraint languages}
Recall that an
optimization problem $A$ is called NP-hard if some NP-complete language can be
recognized in polynomial time with the aid of an oracle for $A$. We assume that $P \ne NP$.

\begin{definition}
Suppose we are given a finite set $A$ and a constraint language
$\Gamma \subseteq \bigcup\limits_{k = 1}^\infty  {{2^{{A^k}}}} $. The language $\Gamma$ is
said to be {\em $R$-tractable} if, for every finite subset
$\Gamma' \subseteq \Gamma$, the task $MinHom_{R}\left( \Gamma' \right)$ is polynomial-time solvable, and $\Gamma$ is
called {\em $R$-NP-hard} if there is a finite subset $\Gamma' \subseteq \Gamma$, such that
the task $MinHom_{R}\left( \Gamma' \right)$ is NP-hard.
\end{definition}

First, we will state some standard definitions from universal algebra.

\begin{definition}
Let $\rho  \subseteq A^m $ and $f:A^n  \to A$.
We say that the function (operation) $f$ {\em preserves} the predicate $\rho $ if,
for every $\left( {x_1^i ,\dots,x_m^i } \right) \in \rho, 1 \leq i \leq n$, we have that $\left( {f\left( {x_1^1 ,\dots,x_1^n
} \right),\dots,f\left( {x_m^1 ,\dots,x_m^n } \right)} \right) \in
\rho $.
\end{definition}

For a constraint language $\Gamma$, let $Pol\left( \Gamma \right)$ denote the
set of operations preserving all predicates in $\Gamma$. Throughout the paper, we let $A$ denote a finite domain and $\Gamma$ a constraint language over $A$. We assume the domain $A$ to be finite.

\begin{definition}
A constraint language $\Gamma$ is called {\em a relational
clone} if it contains every predicate expressible by a first-order formula involving only

a) predicates from $\Gamma \cup \left\{=^{A}\right\}$;

b) conjunction; and

c) existential quantification.
\end{definition}

First-order formulas involving only conjunction and existential quantification are
often called {\em primitive positive (pp) formulas}. For a given constraint language $\Gamma$, the set of all predicates that can be
described by pp-formulas over $\Gamma$ is called the {\em closure} of $\Gamma$ and is denoted by $\langle \Gamma \rangle$.

For a set of
operations $F$ on $A$, let $Inv\left( F \right)$ denote the set of
predicates preserved under the operations of $F$. Obviously, $Inv\left( F \right)$ is a relational clone.
The next result is well-known \cite{bodnarchuk,geiger}.

\begin{theorem}
\label{kuznetsov}
For a constraint language $\Gamma$ over a finite set $A$, $\langle \Gamma \rangle = Inv\left( Pol\left( \Gamma \right) \right)$.
\end{theorem}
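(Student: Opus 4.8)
The statement is the classical Galois correspondence between operations and relations on a finite set, so the plan is to prove the two inclusions separately. The inclusion $\langle \Gamma \rangle \subseteq Inv(Pol(\Gamma))$ is the routine half: every $\rho \in \Gamma$ is preserved by each $f \in Pol(\Gamma)$ by the definition of $Pol$, the equality relation $=^A$ is preserved by every operation, and the class of relations preserved by a fixed set of operations is closed under conjunction (which on relations combines intersection with the pairing of coordinate positions) and under existential quantification (projection onto a subset of coordinates). Since these are exactly the primitive positive constructions, every relation in $\langle \Gamma \rangle$ is preserved by all of $Pol(\Gamma)$, that is, lies in $Inv(Pol(\Gamma))$.

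The substance is the reverse inclusion $Inv(Pol(\Gamma)) \subseteq \langle \Gamma \rangle$. Fix $\rho \subseteq A^m$ with $\rho \in Inv(Pol(\Gamma))$ and enumerate its tuples $\rho = \{\mathbf{a}_1, \ldots, \mathbf{a}_t\}$. I read $\rho$ as a subalgebra of the $m$-th power of the algebra $(A; Pol(\Gamma))$: invariance under $Pol(\Gamma)$ says precisely that $\rho$ is closed under coordinatewise application of every polymorphism, and since its generators are all of its own tuples, $\rho$ equals the subalgebra they generate. Writing $\mathbf{e}_i = ((\mathbf{a}_1)_i, \ldots, (\mathbf{a}_t)_i) \in A^t$ for the tuple recording the $i$-th coordinate of every member of $\rho$, a coordinatewise application of a $t$-ary operation $g$ to $\mathbf{a}_1, \ldots, \mathbf{a}_t$ produces $(g(\mathbf{e}_1), \ldots, g(\mathbf{e}_m))$. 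Hence $\rho = \{(g(\mathbf{e}_1), \ldots, g(\mathbf{e}_m)) : g \in Pol(\Gamma), \ \mathrm{ar}(g) = t\}$, with the $j$-th projection $g = \pi_j$ recovering the generator $\mathbf{a}_j$.

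It remains to realize the right-hand side by a primitive positive formula. The plan is first to pp-define the set $Pol(\Gamma)^{(t)}$ of all $t$-ary polymorphisms as a single relation of arity $|A|^t$ whose coordinates are indexed by the elements $\mathbf{u} \in A^t$: a candidate value-table $(z_{\mathbf{u}})_{\mathbf{u} \in A^t}$ is a $t$-ary polymorphism exactly when, for every $\sigma \in \Gamma$ of arity $k$ and every choice of columns $\mathbf{u}_1, \ldots, \mathbf{u}_k \in A^t$ all of whose rows lie in $\sigma$, the constraint $\sigma(z_{\mathbf{u}_1}, \ldots, z_{\mathbf{u}_k})$ holds; this is a conjunction of $\Gamma$-constraints on the variables $z_{\mathbf{u}}$. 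Identifying the output variables $x_1, \ldots, x_m$ with $z_{\mathbf{e}_1}, \ldots, z_{\mathbf{e}_m}$ and existentially quantifying the remaining $z_{\mathbf{u}}$ then yields a pp-formula over $\Gamma$ defining exactly $\rho$. Since $A$ is finite there are only finitely many relations of each arity, so even when $\Gamma$ is infinite the a priori infinite conjunction over $\sigma \in \Gamma$ collapses to a finite pp-definition. The main obstacle is the verification that the projected relation is $\rho$ and nothing larger: the containment $\supseteq$ follows from the projections being polymorphisms, while $\subseteq$ is precisely where the invariance of $\rho$ is used, together with the finiteness argument that turns the description of $Pol(\Gamma)^{(t)}$ into a genuine finite formula.
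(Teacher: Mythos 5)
The paper offers no proof of this theorem---it is stated as well known and attributed to Bodnarchuk et al.\ and Geiger---so there is nothing in the text to compare your argument against; what you have written is the standard proof from that cited literature. Both halves are essentially right: the closure of $Inv(Pol(\Gamma))$ under equality, conjunction and existential quantification gives $\langle\Gamma\rangle\subseteq Inv(Pol(\Gamma))$, and the indicator-problem construction (pp-defining the $|A|^t$-ary relation of value tables of $t$-ary polymorphisms and projecting onto the coordinates $\mathbf{e}_1,\dots,\mathbf{e}_m$ recording the columns of $\rho$) gives the converse, with invariance of $\rho$ under $Pol(\Gamma)$ yielding containment of the projection in $\rho$ and the projections $\pi_j$ recovering each generating tuple. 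The one loose point is your justification that the conjunction over $\sigma\in\Gamma$ is finite: ``finitely many relations of each arity'' does not bound the number of constraints when $\Gamma$ contains relations of unboundedly many arities. The standard (and equally short) repair is that the solution set of the full, possibly infinite, constraint system is a subset of the finite set $A^{|A|^t}$ and equals the intersection of the solution sets of its finite subsystems, so some finite subsystem already achieves that intersection; that finite subsystem is the required pp-formula. (A statement-level aside: the empty relation lies in $Inv(Pol(\Gamma))$ vacuously but is pp-definable only under the usual convention; this is a convention issue, not a defect of your argument.)
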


Theorem \ref{kuznetsov} tells us that the Galois closure of a constraint language $\Gamma$
is equal to the set of all predicates that can be obtained
via pp-formulas from the predicates in $\Gamma$.
We will omit the proof of the following standard theorem.

\begin{theorem}
\label{standard}
For any finite constraint language $\Gamma$, finite $\Gamma'\subseteq \langle \Gamma \rangle$ and finite $R \subseteq \left\{r:A\rightarrow \bbbn\right\}$ there is a polynomial time reduction from $MinHom_{R}\left( \Gamma' \right)$ to $MinHom_{R}\left( \Gamma \right)$.
\end{theorem}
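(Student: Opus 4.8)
The plan is to replace every constraint whose relation lies in $\Gamma'$ by a gadget built from a primitive positive definition of that relation over $\Gamma$, introducing auxiliary variables of zero weight so that the objective value is preserved exactly. Since $\Gamma' \subseteq \langle \Gamma \rangle$ and $\Gamma'$ is finite, the definition of the closure $\langle\cdot\rangle$ (cf. Theorem~\ref{kuznetsov}) lets me fix, for each $\rho \in \Gamma'$ of arity $m$, a pp-formula
$$\rho(x_1, \dots, x_m) \equiv \exists y_1 \cdots \exists y_{t_\rho}\ \bigwedge_{l} \sigma_l(\mathbf{z}_l),$$
where each $\sigma_l \in \Gamma \cup \left\{=^A\right\}$ and each $\mathbf{z}_l$ is a tuple over $\left\{x_1, \dots, x_m, y_1, \dots, y_{t_\rho}\right\}$. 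Because $\Gamma'$ is finite, the numbers $t_\rho$ and the number of conjuncts are bounded by a constant depending only on $\Gamma$ and $\Gamma'$, not on the instance.

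Given an instance $(V, {\sf C}, W)$ of $MinHom_R(\Gamma')$, I would build an instance of $MinHom_R(\Gamma)$ as follows. I start with the variable set $V$ and weights $w'_{vr} = w_{vr}$. For each constraint $C = ((v_1, \dots, v_m), \rho) \in {\sf C}$ I introduce fresh variables $y_1^C, \dots, y_{t_\rho}^C$, all of weight $0$, and for every conjunct $\sigma_l(\mathbf{z}_l)$ with $\sigma_l \in \Gamma$ I post the constraint obtained by substituting $v_i$ for $x_i$ and $y_j^C$ for $y_j$. The equality conjuncts cannot be posted directly, since $=^A$ need not belong to $\Gamma$; instead I record them in a union--find structure over all variables. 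After processing ${\sf C}$, I collapse each equivalence class to a single representative, replace every occurrence of a variable in the posted $\Gamma$-constraints by its representative, and set the weight of a representative to the sum of the weights over its class. As existential variables carry weight $0$, this summation only accumulates the weights of those original variables that are forced equal, and the number of variables and constraints grows by at most a constant factor, so the construction runs in polynomial (indeed linear) time.

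Correctness amounts to a cost-preserving correspondence between solutions. Given a feasible $f$ for the original instance, each tuple $(f(v_1), \dots, f(v_m))$ lies in $\rho$, so the existential witnesses demanded by the pp-definition exist; assigning them to the $y_j^C$ and keeping $f$ on $V$ yields a feasible assignment for the new instance, which is well defined on the collapsed classes because any equality entailed by some $\rho$ is already satisfied by $f$. Conversely, any feasible assignment for the new instance satisfies, for each $C$, the whole conjunction defining $\rho$, so its restriction to $V$ is feasible for the original instance. In both directions the measure is unchanged: the auxiliary variables contribute nothing, and for a class of identified original variables the equality $f(v_i)=f(v_j)$ turns $\sum_{r} w_{v_i r} r(f(v_i)) + \sum_{r} w_{v_j r} r(f(v_j))$ into $\sum_{r} (w_{v_i r}+w_{v_j r}) r(f(v_i))$, which matches the summed weight assigned to the representative. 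Hence the two instances share the same optimum, and an optimal solution of the $MinHom_R(\Gamma)$ instance restricts to an optimal solution of the $MinHom_R(\Gamma')$ instance. I expect the only genuinely delicate point to be the treatment of the equality relation, which appears in the definition of $\langle\Gamma\rangle$ but may be absent from $\Gamma$; eliminating it by variable identification with additive weight bookkeeping is exactly what keeps the reduction simultaneously feasible and cost-preserving.
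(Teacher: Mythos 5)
The paper omits the proof of this theorem entirely (``We will omit the proof of the following standard theorem''), and your argument is precisely the standard one it has in mind: expand each $\Gamma'$-constraint into a fixed pp-definition over $\Gamma\cup\{=^A\}$, give the existentially quantified variables zero weight, and eliminate the equality conjuncts by identifying variables and summing their weights, which yields a cost-preserving, polynomial-time (constant-factor) reduction since $\Gamma'$ is finite. Your treatment of the delicate points (witnesses for the auxiliary variables, well-definedness after collapsing equivalence classes, and the weight bookkeeping for identified original variables) is correct; the only cosmetic caveat is that if one insists, as the abstract's wording ``positive weight'' suggests, that all weights be strictly positive, the zero weights on auxiliary variables should be replaced by scaling every original weight by a sufficiently large constant $N$ and giving each auxiliary variable weight $1$, which perturbs the optimum by less than $N$ and therefore still lets one read off an optimal solution of the original instance.
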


The previous theorem tells us that the complexity of $MinHom_{R}\left( \Gamma \right)$ is basically
determined by $Inv\left( Pol\left( \Gamma \right) \right)$, i.e., by $Pol\left( \Gamma \right)$.
That is why we will be concerned with the classification of sets of operations $F$ for which $Inv\left( {F} \right)$ is a tractable constraint language.

\begin{definition}
An {\em algebra} is an ordered pair ${\cal A} = \left(A, F\right)$ such that $A$ is
a nonempty set (called a universe) and $F$ is a family of finitary operations on $A$. An
algebra with a finite universe is referred to as a finite algebra.
\end{definition}

\begin{definition}
An algebra ${\cal A} = \left(A, F\right)$ is called {\em $R$-tractable} if $Inv(F)$ is a $R$-tractable constraint language and ${\cal A}$ is called {\em $R$-NP-hard} if $Inv(F)$ is an $R$-NP-hard constraint language.
\end{definition}

For $B\subseteq A$, define $R_B = \left\{f|_{B}|f\in R\right\}$, where $f|_B$ is a restriction of $f$ on a set $B$.
We will use the term MinHom-tractable (NP-hard) instead of $\{e_i|i\in A\}$-tractable (NP-hard) and, in case $A=\left\{0,1\right\}$, the term min-tractable (NP-hard) instead of $\{x\}$-tractable (NP-hard).

We only need to consider a very special type of algebras, so called {\em conservative} algebras.

\begin{definition}
An algebra ${\cal A} = \left(A, F\right)$ is called {\em conservative} if for every operation $f \in F$ we have that $f\left( {x_1 ,\dots,x_n } \right)
\in \left\{ {x_1 ,\dots,x_n } \right\}$.
\end{definition}

Since we assume that $\Gamma$ is a constraint language with all unary relations over the domain set $A$, then ${\cal A} = \left(A, Pol\left( \Gamma \right)\right)$ is conservative.
Besides conservativeness of constraint languages we will make some additional restrictions on function sets $R$.

\begin{definition}
Suppose we are given a finite set of functions $R \subseteq \left\{r:A\rightarrow \bbbn\right\}$.
Denote by $G\left( R \right) = \left( A, E\left( R \right) \right)$ a directed graph with a set of vertices $A$ and an edge set $E\left( R \right) = \left\{(a,b)| \exists r\in R\,\, r ( a ) > r ( b )\right\}$. The $UG\left( R \right)$ is the graph $G\left( R \right)$ with all edges considered as undirected. We will call $G$ {\em a preference graph} and $UG$ {\em an undirected preference graph}.
\end{definition}

In the sequel, we will assume that a graph $UG$ is complete. It is easy to see that $UG\left( \{e_i|i\in A\} \right)$, $UG\left( \left\{n-x\right\} \right)$ are complete and our results can be applied to $MinHom$ and $MaxSol$.

\section{Boolean case and the necessary local conditions}
The first step to understand the structure of $R$-tractable
algebras is to understand the boolean case.
Well-known structure of boolean clones \cite{post} helps us to prove the following theorem.

\begin{theorem}
\label{by_post}
A boolean clone $H$ is MinHom-tractable if either $\left\{x\wedge y, x\vee y\right\} \subseteq H$ or $\left\{\left( {x \wedge \overline y } \right) \vee \left( {\overline y  \wedge z} \right) \vee \left( {x \wedge z} \right)\right\} \subseteq H$, where $\overline{x}, x\wedge y, x\vee y$ denote negation, conjunction and disjunction.
Otherwise, $H$ is MinHom-NP-hard. A conservative boolean clone $H$ is min-tractable if either $\left\{x\wedge y\right\} \subseteq H$ or $\left\{\left( {x \wedge \overline y } \right) \vee \left( {\overline y  \wedge z} \right) \vee \left( {x \wedge z} \right)\right\} \subseteq H$.
Otherwise, $H$ is min-NP-hard.
\end{theorem}

In the proof of Theorem \ref{by_post} we will need the following definition.

\begin{definition}
A constraint language $\Gamma$ over $\left\{0, 1\right\}$ is called a
{\em MinHom(min)-maximal} constraint language if it is conservative MinHom(min)-tractable and is not
contained in any other conservative MinHom(min)-tractable languages.
\end{definition}

We identify all MinHom(min)-maximal constraint languages using Post`s
classification \cite{post}.
Via Theorems \ref{kuznetsov},\ref{standard} we conclude that every MinHom(min)-maximal constraint language corresponds to some conservative
functional clone.
In the case $A = \left\{ {0,1} \right\}$, there is a countable number
of conservative clones: we list them below according to the table on page 76 \cite{marchenkov}.

\begin{lemma}
\label{30}
The relational clones
$Inv\left( {M_{01} } \right)$ and $Inv\left( {S_{01} } \right)$
are MinHom-maximal constraint languages. Every other constraint language
given in the table that is not contained in any of these two
is MinHom-NP-hard.
The relational clones
$Inv\left( {K_{01} } \right)$ and $Inv\left( {S_{01} } \right)$
are min-maximal constraint languages. Every other constraint language given in the table that is not contained in any of these two
is min-NP-hard.
\end{lemma}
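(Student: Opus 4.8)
The plan is to classify, one by one, the conservative Boolean clones recorded in the table of \cite{marchenkov}, deciding for each clone $H$ whether $Inv(H)$ is tractable or NP-hard, and to show that $Inv(M_{01})$, $Inv(S_{01})$ (for MinHom) and $Inv(K_{01})$, $Inv(S_{01})$ (for min) are exactly the maximal tractable languages. Although the list of clones is countable, the classification reduces to finitely many clones. By Theorem \ref{standard} the complexity of $MinHom_R(Inv(H))$ is determined by $H$, and since $Inv$ is order-reversing, $H\subseteq H'$ gives $Inv(H')\subseteq Inv(H)$; hence tractability of a language propagates \emph{upward} along the clone order and NP-hardness propagates \emph{downward}. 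Therefore it suffices (i) to prove tractability of the candidate maximal languages, and (ii) to prove NP-hardness only for the \emph{maximal} conservative clones $H$ with $M_{01}\not\subseteq H$ and $S_{01}\not\subseteq H$ (respectively $K_{01}\not\subseteq H$ and $S_{01}\not\subseteq H$); every other entry of the table then lands on one side by monotonicity. Post's lattice shows that there are only a handful of such maximal ``bad'' clones.

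For tractability I would argue as follows. The relations in $Inv(M_{01})$ are exactly those closed under the coordinatewise $\wedge$ and $\vee$, so the feasible assignments of any instance form a distributive sublattice of $\{0,1\}^V$; since the MinHom objective is a sum of unary cost terms, hence modular, the minimization reduces to an $(s,t)$-minimum cut and is polynomial. For the \emph{min} objective $\sum_{v}w_v f(v)$ the same conclusion already holds under closure of $\wedge$ alone: a relation in $Inv(K_{01})$ has a coordinatewise least tuple, so the feasible set has a unique minimum that is automatically optimal, which is why $K_{01}$ is min-tractable yet (lacking $\vee$) fails to be MinHom-tractable, accounting for the two different lists. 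For $Inv(S_{01})$ I would use that $S_{01}$ is generated by the self-dual operation $g=(x\wedge\overline y)\vee(\overline y\wedge z)\vee(x\wedge z)$, which satisfies $g(x,x,y)=g(y,x,x)=y$ and is therefore Mal'tsev; analysing the structure of $g$-invariant relations (a self-dual, affine-type class) together with the symmetry between the two cost directions yields a polynomial optimization procedure.

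For hardness I would, for each maximal bad clone located in Post's lattice, exhibit a relation $\rho\in Inv(H)$ or a short pp-definition over $Inv(H)$ (legitimate by Theorems \ref{kuznetsov} and \ref{standard}) and, together with conservative unary relations and suitable costs, encode a standard NP-complete problem. A representative case is the clone of the majority operation $(x\wedge y)\vee(y\wedge z)\vee(x\wedge z)$: its invariant relations are exactly the $2$-SAT-definable ones, and the associated MinHom instance contains minimum-weight vertex cover, so it is NP-hard even though the corresponding $CSP$ is trivial. This single example already proves the maximality of $Inv(M_{01})$ and $Inv(S_{01})$, since that clone sits strictly below both and its language strictly contains both of theirs. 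The affine clone generated by $x\oplus y\oplus z$ (addition modulo $2$) is handled by reduction from minimum-weight codeword, and the remaining bad clones (the conjunction clone $K_{01}$ for MinHom and its disjunctive dual for min, etc.) by analogous gadgets, carried out separately for the two cost models because the asymmetry of the \emph{min} objective shifts the boundary precisely from $M_{01}$ to $K_{01}$.

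The main obstacle I anticipate is twofold. First, the hardness reductions at the borderline: for every maximal bad clone one must verify that the intended hard gadget is genuinely pp-definable from $Inv(H)$ and that the conservative unary costs realize the desired objective, and this must be done for both MinHom and min since the two classifications diverge. Second, completeness of the table analysis against Post's classification: one must confirm that $M_{01}$, $S_{01}$ and $K_{01}$ are generated by $\{x\wedge y,x\vee y\}$, by $g$, and by $\{x\wedge y\}$ respectively among conservative clones, and that no conservative clone avoiding both $M_{01}$ (or $K_{01}$) and $S_{01}$ has been overlooked, so that the two maximal tractable languages listed in each case really are the only ones. I expect the delicate part of the tractability side to be proving that the optimization over $Inv(S_{01})$—not merely the feasibility problem—is polynomial, since general minimization over affine relations is hard and the argument must exploit the extra self-dual structure of $g$.
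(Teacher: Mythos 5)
Your overall strategy is the same as the paper's: run through the Marchenkov/Post table of conservative Boolean clones, prove tractability of the candidate maximal languages, derive hardness of everything else by pp-definability (Theorems \ref{kuznetsov} and \ref{standard}) and monotonicity along the clone order, and read off maximality from the completeness of the table. Your tractability arguments are more self-contained than the paper's, which simply cites \cite{jonnson} for $Inv\left( {M_{01} } \right)$ and $Inv\left( {S_{01} } \right)$: the min-cut reduction for $Inv\left( {M_{01} } \right)$ and the least-solution argument for $Inv\left( {K_{01} } \right)$ are both correct, and the latter is exactly the paper's $1$-consistency argument. For $Inv\left( {S_{01} } \right)$ you only gesture at ``a polynomial optimization procedure''; the concrete fact you need is that $Inv\left( {S_{01} } \right)$ is generated by $x=0$ and $x_1\ne x_2$, so an instance decomposes into connected components of a disequality graph, each admitting exactly two complementary assignments, and one picks the cheaper per component. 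Your hardness gadgets (minimum vertex cover via $x_1\vee x_2$, minimum-weight codeword for the affine clone) match or legitimately replace the paper's (maximum independent set, Max-CUT).

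The one step that fails as written is the claim that the single majority-clone example ``already proves the maximality of $Inv\left( {M_{01} } \right)$ and $Inv\left( {S_{01} } \right)$.'' NP-hardness propagates to larger languages, hence to \emph{smaller} clones, so hardness of $Inv\left( {SM} \right)$ (where $SM$ is the clone generated by the majority operation) only rules out the extensions $Inv\left( H \right)$ with $H \subseteq SM$. But $M_{01}$ has proper subclones incomparable with $SM$ --- notably $K_{01}$ (majority does not preserve $x_1 = x_2 x_3$, and $\wedge$ is not self-dual) and the chains $MI_1^m$, $MO_0^m$ --- whose invariant languages strictly contain $Inv\left( {M_{01} } \right)$ without containing $Inv\left( {SM} \right)$. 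Each of these needs its own hardness proof before maximality of $Inv\left( {M_{01} } \right)$ follows, and this is not a formality: $Inv\left( {K_{01} } \right)$ is min-tractable yet MinHom-NP-hard, which is precisely why the two classifications in the lemma differ at this point. You do commit later to handling ``the remaining bad clones'' by analogous gadgets, so the plan is repairable, but the maximality conclusion must be drawn from hardness of \emph{all} maximal proper conservative subclones of $M_{01}$ (respectively $K_{01}$ and $S_{01}$), not from one witness; this is what the paper's exhaustive pass through the table accomplishes.
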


\begin{proof}
 For every row, the closure of the
predicates given is equal to the set of all predicates preserved
under the functions of the corresponding clone.
\[
\begin{array}{*{20}c}
   {T_{01} } & {x = 0,x = 1}  \\
   {M_{01} } & {x = 0,x = 1,x_1  \le x_2 }  \\
   {S_{01} } & {x = 0,x_1  \ne x_2 }  \\
   {SM} & {x_1  \ne x_2 ,x_1  \le x_2 }  \\
   {L_{01} } & {x = 1,x_1  \oplus x_2  \oplus x_3  = 0}  \\
   {U_{01} } & {x = 0,x = 1,x_1  = x_2  \vee x_1  = x_3 }  \\
   {K_{01} } & {x = 0,x = 1,x_1  = x_2 x_3 }  \\
   {D_{01} } & {x = 0,x = 1,x_1  = x_2  \vee x_3 }  \\
   {I_1^m } & {x = 1,x_1 x_2 \dots x_m  = 0}  \\
   {MI_1^m } & {x = 1,x_1  \le x_2 ,x_1 x_2 \dots x_m  = 0}  \\
   {O_0^m } & {x = 0,x_1  \vee x_2  \vee \dots \vee x_m  = 1}  \\
   {MO_0^m } & {x = 0,x_1  \le x_2 ,x_1  \vee x_2  \vee \dots \vee x_m  = 1}  \\
\end{array}
\]
where $x\oplus y = x+y\left(mod\rm{\,\,}2\right)$.

The class $Inv\left( {T_{01} } \right)$ is MinHom(min)-tractable, since it
contains only two simple unary predicates
$\{0\}$ and $\{1\}$. As we will see later, it cannot be
MinHom(min)-maximal since it is included in other MinHom(min)-tractable constraint languages.

Let us prove that $Inv\left( {M_{01} }
\right)$ and $Inv\left( {S_{01} } \right)$ are MinHom-tractable. By Theorem \ref{standard}, it is
equivalent to polynomial solvability of $MinHom\left( {\left\{
{\left\{ 0 \right\},\left\{ 1 \right\},\left\{ {\left( {x_1 ,x_2
} \right)|x_1  \le x_2 } \right\}} \right\}} \right)$ and $MinHom\left( \left\{ \left\{ 0
\right\},\left\{ \left( {x_1 ,x_2 } \right)|x_1  \ne x_2
\right\}\right\} \right)$, because the
classes $Inv\left( {M_{01} } \right)$ and $Inv\left( {S_{01} } \right)$ are the closures of those sets.
We will skip the proof since it can be found in \cite{jonnson}(in this paper boolean $MinHom$ called Max AW Ones).
It is easy to see that MinHom-tractability implies min-tractability of those classes.

Let us show that all the classes in the table, except $Inv\left( {M_{01} } \right)$,
$Inv\left( {S_{01} } \right)$ and $Inv\left( {T_{01} } \right)$,
are MinHom-NP-hard, and all the classes, except $Inv\left( {M_{01} } \right)$,
$Inv\left( {S_{01} } \right)$, $Inv\left( {T_{01} } \right)$, $Inv\left( {K_{01} } \right)$ and $Inv\left( {MI_1^{2}} \right), \dots, Inv\left( {MI_1^{\infty} } \right)$, where $MI_1^{\infty} = \bigcup\limits_{m=1}^{\infty}MI_1^{m}$, are min-NP-hard.
Since,
\[
\begin{array}{l}
 x_1  \vee x_2  = \exists x_3 \left[ {x_1  \ne x_3 } \right]\wedge \left[ {x_3  \le x_2 } \right] \\
 x_1  \vee x_2  = \exists x_3 \left[ {x_3  = 1} \right]\wedge \left[ {x_3  = x_1  \vee x_3  = x_2 } \right] \\
 \overline {x_1 }  \vee \overline {x_2 }  = \exists x_3 \left[ {x_3  = 0} \right]\wedge \left[ {x_3  = x_1 x_2 } \right] \\
 x_1  \vee x_2  = \exists x_3 \left[ {x_3  = 1} \right]\wedge \left[ {x_3  = x_1  \vee x_2 } \right] \\
 \overline {x_1 }  \vee \overline {x_2 }  = \exists x_3 \dots x_m \left[ {x_1 x_2 \dots x_m  = 0} \right]\wedge \left[ {x_2  = x_3 } \right]\wedge \dots\wedge \left[ {x_{m - 1}  = x_m } \right] \\
 x_1  \vee x_2  = \exists x_3 \dots x_m \left[ {x_1  \vee x_2  \vee \dots \vee x_m  = 1} \right]\wedge \left[ {x_2  = x_3 } \right]\wedge \dots\wedge \left[ {x_{m - 1}  = x_m } \right] \\
 \end{array}
\]
we see that $\left\{ {\left( {x_1 ,x_2 } \right)|x_1  \vee x_2 }
\right\} \in Inv\left( {SM} \right)$, $Inv\left( {U_{01} }
\right),Inv\left( {D_{01} } \right)$, $Inv\left( {O_0^m }
\right),Inv\left( {MO_0^m } \right)$ and $\left\{ {\left( {x_1
,x_2 } \right)|\overline {x_1 }  \vee \overline {x_2 } } \right\}
\in Inv\left( {K_{01} } \right)$, $Inv\left( {I_1^m } \right)$, $
Inv\left( {MI_1^m } \right)$.

We first prove that
$MinHom_{\{x\}}\left( {\left\{ {\left\{
{\left( {x_1 ,x_2 } \right)|x_1  \vee x_2 } \right\}} \right\}}
\right)$ is NP-hard. Suppose an instance of this problem consists of an undirected graph $G=\left(V,E\right)$ where each vertex is considered as a variable. For each pair of variables $(u,v)\in E$, we require their assignments to satisfy $u = 1$ or $v = 1$. It is easy to see
that for any such assignment $f$, the set $\{x|f(x)=0\}$ is independent
in the graph $G$. Furthermore, for any independent set $S$ in the graph $G$,
$g(x) = [x\notin S]$ is a satisfying assignment.
If we define $w_{i} = 1$ for $i\in V$, then
$MinHom_{\{x\}}$ is equivalent to finding a maximum independent set.
This implies that $MinHom_{\{x\}}\left( {\left\{ {\left\{
{\left( {x_1 ,x_2 } \right)|x_1  \vee x_2 } \right\}} \right\}}
\right)$ is NP-hard, since finding independent sets of maximal size is an NP-hard problem.

Therefore, $Inv\left( {SM} \right)$,
$Inv\left( {U_{01} } \right)$, $Inv\left( {D_{01} } \right)$,
$Inv\left( {O_0^m } \right)$, $Inv\left( {MO_0^m } \right)$ are min-NP-hard, and, consequently, MinHom-NP-hard.

Classes $Inv\left( {K_{01} } \right)$, $Inv\left( {I_1^m } \right)$,
$Inv\left( {MI_1^m } \right)$ are MinHom-NP-hard also since $MinHom\left( {\left\{ {\left\{
{\left( {x_1 ,x_2 } \right)|x_1  \vee x_2 } \right\}} \right\}}
\right)$ and $MinHom\left( {\left\{ {\left\{ {\left( {x_1 ,x_2 }
\right)|\overline {x_1 }  \vee \overline {x_2 } } \right\}}
\right\}} \right)$ are equivalent. Let us show that these classes are min-tractable.
It is easy to see that $Inv\left( {K_{01} } \right)$ contains $Inv\left( {I_1^m } \right)$,
$Inv\left( {MI_1^m } \right)$ and $\wedge\in K_{01}$. Indeed, any constraint satisfaction problem
with predicates from $Inv\left( {K_{01} } \right)$ can be solved by local 1-consistency algorithm and a solution is an
assignment of every variable to a minimum of its allowed values. Obviously, the same algorithm solves $MinHom_{\{x\}}\left( {Inv\left( {K_{01} } \right) } \right)$.

It remains to prove min-NP-hardness of $Inv\left( {L_{01} } \right)$ which will also show its MinHom-NP-hardness.
First we will show that using an algorithm for $MinHom_{\{x\}}\left( {\left\{ {\left( {x_1 ,x_2 ,x_3 } \right)|x_1  \oplus x_2 \oplus x_3  = 1}
\right\}} \right)$ as an oracle, we can solve Max-CUT in
polynomial time. Since we have that $x_1  \oplus x_2 \oplus x_3  = 1 \Leftrightarrow \exists y,z\,\,x_1  \oplus x_2 \oplus y  = 0 \& y  \oplus x_3 \oplus z  = 0 \& z = 1$, we will conclude that $Inv\left( {L_{01} } \right)$ is min-NP-hard.

Let $G = \left( {V,E} \right)$ be a graph and introduce
variables $x_{ij} ,y_i ,y_j ,i,j \in V$. A system of equations $x_{ij} \oplus y_i  \oplus y_j  = 1,i,j \in V$ can be viewed as an
instance of $MinHom_{\{x\}}\left( {\left\{ {\left( {x_1 ,x_2 ,x_3 } \right)|x_1  \oplus x_2 \oplus x_3  = 1}
\right\}} \right)$. It is easy to see that arbitrary boolean vector
$\overline y = \left( {y_1 ,\dots,y_{\left| V \right|} } \right)$ defines a single solution $x_{ij} = y_i \oplus y_j  \oplus 1,i,j \in V$ of the system. Vector $\overline y $ can be considered as the cut
$\left\{ {i|y_i  = 1} \right\} \subseteq V$ and the value $\sum\limits_{ij}(1-x_{ij})$ is equal to the doubled cost of the cut. Then Max-CUT is polynomially reduced to minimizing $\sum\limits_{ij}x_{ij}$ which is equivalent to $MinHom_{\{x\}}$.

Only two classes $Inv\left( {M_{01} } \right)$ and
$Inv\left( {S_{01} } \right)$ are left as candidates for MinHom-maximality. Since they are not
included in each other, they are both maximal. The same argument shows that $Inv\left( {K_{01} } \right)$ and
$Inv\left( {S_{01} } \right)$ are min-maximal.
\end{proof}

\noindent {\bf Proof of Theorem 12.}
Obviously, $\left\{\wedge, \vee\right\}$ is a basis of $M_{01}$, $\left\{\wedge\right\}$ is
a basis of $K_{01}$
and $\left\{\left( {x \wedge \overline y } \right) \vee \left( {\overline y  \wedge z} \right) \vee \left( {x \wedge z} \right)\right\}$ is a basis of $S_{01}$. By Lemma \ref{30} we conclude the statement of the theorem.
\qed

\begin{definition}
We call a pair of vertices $\{a,b\}$ in $G\left( R \right)$ {\em a MinHom-pair} if they have arcs in both directions and we call $\{a,b\}$ {\em a min-pair} if they have an arc in one direction only.
\end{definition}

Suppose $f\in F$. By $\mathop  \downarrow \limits_b^a f$, we mean $a \ne b$ and $f \left( {a,b} \right) = f \left( {b,a} \right) = b$.
Guided by Theorem \ref{by_post} we obtain the following definition.

\begin{definition}
Let $F$ be a conservative functional clone over $A$ and $R \subseteq \left\{r:A\rightarrow \bbbn\right\}$. We say that $F$ satisfies the {\em necessary local conditions for $R$} if and only if
\begin{itemize}
\item[(1)]
for every MinHom-pair $\{a,b\}$ of $G\left( R \right)$, either
\begin{itemize}
\item[(1.a)]
there exists $f_1, f_2\in F$ s.t. $\mathop  \downarrow \limits_b^a f_1$ and $\mathop  \uparrow \limits_b^a f_2$; or
\item[(1.b)]
there exists $f\in F$ s.t. $f|_{\{a,b\}} \left( {x,x,y} \right) = f|_{\{a,b\}} \left( {y,x,x} \right) = f|_{\{a,b\}} \left( {y,x,y} \right) = y$,
\end{itemize}
\item[(2)]
for every min-pair $\{a,b\}$ of $G\left( R \right)$ such that $(a,b)\in E\left( R \right)$, either
\begin{itemize}
\item[(2.a)]
there exists $f\in F$ s.t. $\mathop  \downarrow \limits_b^a f$; or
\item[(2.b)]
there exists $f\in F$ s.t. $f|_{\{a,b\}} \left( {x,x,y} \right) = f|_{\{a,b\}} \left( {y,x,x} \right) = f|_{\{a,b\}} \left( {y,x,y} \right) = y$.
\end{itemize}
\end{itemize}
\end{definition}

\begin{theorem}
Suppose $F$ is a conservative functional clone. If $F$ is $R$-tractable and $UG\left( R \right)$ is complete, then it satisfies the necessary local conditions for $R$. If $F$ does not satisfy the necessary local conditions for $R$, then it is $R$-NP-hard.
\end{theorem}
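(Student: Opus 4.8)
The plan is to prove the second assertion---that violating the necessary local conditions forces $R$-NP-hardness---and then recover the first as its contrapositive under the standing assumption $P\neq NP$. Here completeness of $UG(R)$ enters only to guarantee that every two-element subset of $A$ is either a MinHom-pair or a min-pair, so that the stated conditions really do range over all pairs that can carry cost information; for the NP-hardness direction itself no completeness is needed, since ``fails the conditions'' already names an offending MinHom-pair or min-pair. The whole argument is a reduction of the boolean classification of Theorem~\ref{by_post} to the two-element restriction of $F$ at that single pair.

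First I would localize. Suppose $F$ violates the conditions, witnessed by a pair $\{a,b\}$ that is either a MinHom-pair failing (1) or a min-pair with $(a,b)\in E(R)$ failing (2). Because $F$ is conservative, every $f\in F$ keeps tuples inside $\{a,b\}$, so for any $\rho\subseteq\{a,b\}^k$ preserved by the boolean clone $H:=F|_{\{a,b\}}$ the same $\rho$, viewed over $A$, already lies in $Inv(F)$; hence the $\{a,b\}$-relations of $Inv(F)$ are exactly $Inv(H)$. Identify $\{a,b\}$ with $\{0,1\}$ via $a\mapsto 1$, $b\mapsto 0$. Then $\mathop\downarrow\limits_b^a f$ and $\mathop\uparrow\limits_b^a f$ say that $H$ contains the two semilattice operations $\wedge$ (bottom $b$) and $\vee$, while the identities in (1.b)/(2.b) say precisely that $H$ contains the self-dual $S_{01}$-operation $g=(x\wedge\overline y)\vee(\overline y\wedge z)\vee(x\wedge z)$ (a truth-table check, already used for Theorem~\ref{by_post}). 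Thus failure of (1) means $H$ contains neither $\{\wedge,\vee\}$ nor $g$, so $H$ is MinHom-NP-hard; failure of (2) means $H$ contains neither $\wedge$ (bottom $b$) nor $g$, so $H$ is min-NP-hard.

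Next I would transport the boolean costs through $R$. The relation part of the reduction is free: given a hard boolean instance over $Inv(H)$, attach the unary constraint $\{a,b\}\in Inv(F)$ to every variable and reuse each boolean relation as the same relation over $A$, which lies in $Inv(F)$ by the previous step; the resulting finite $\Gamma'\subseteq Inv(F)$ has feasible $f:V\to A$ exactly the boolean solutions. For the objective, observe that on $\{a,b\}$ any $r\in R$ equals $r(b)+(r(a)-r(b))\,e_a$, so up to an additive constant it is a multiple of $e_a$, and additive constants never move the minimizer. A MinHom-pair supplies $r_1,r_2\in R$ with $r_1(a)>r_1(b)$ and $r_2(b)>r_2(a)$, giving strictly positive multiples of $e_a$ and of $e_b$; choosing the weights $w_{v,r_1},w_{v,r_2}$ and scaling the target by the product of the two gaps realizes an arbitrary boolean MinHom objective $\sum_v(\alpha_v e_a+\beta_v e_b)$ up to a global positive factor. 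A min-pair with $(a,b)\in E(R)$ supplies only $r_1$, and since then every $r\in R$ satisfies $r(b)\le r(a)$, the realizable costs are exactly the non-negative multiples of $e_a$, i.e. the boolean min-objective (minimize the number of values equal to $a\cong 1$). In either case optimal solutions are preserved, yielding a polynomial reduction from the NP-hard boolean problem, so $MinHom_R(Inv(F))$ is NP-hard and $F$ is $R$-NP-hard.

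The routine bookkeeping is that the weights $w_{vr}\in\bbbn$ can hit the required values: if weights must be strictly positive, one inflates the target costs and absorbs the mandatory contributions of the remaining $r\in R$, which on $\{a,b\}$ only shift the objective by multiples of $e_a,e_b$. The genuine obstacle is the cost-simulation step itself---one must verify that the \emph{directional} information carried by the arcs of $G(R)$ at $\{a,b\}$ (both directions for a MinHom-pair, one direction for a min-pair) is exactly the information needed to synthesize the boolean cost functions $\{e_a,e_b\}$ versus $\{e_a\}$ that separate the MinHom-tractability criterion from the min-tractability criterion in Theorem~\ref{by_post}. This is precisely what makes the two families of local conditions the correct ones, and it is where the hypothesis that $UG(R)$ is complete is used, ensuring no pair escapes classification.
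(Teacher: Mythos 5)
The paper states this theorem \emph{without proof}, so there is no official argument to compare against line by line; your write-up supplies exactly the reasoning the paper leaves implicit behind the phrase ``guided by Theorem~\ref{by_post}.'' Your two main steps are the right ones and are correct: (i) localization --- by conservativity, $H = F|_{\{a,b\}}$ is a conservative boolean clone and every $\rho\subseteq\{a,b\}^k$ in $Inv(H)$ already lies in $Inv(F)$, and the conditions (1.a), (2.a), (1.b)/(2.b) translate precisely into $H$ containing $\{\wedge,\vee\}$, the appropriately oriented $\wedge$, or the $S_{01}$-generator, so failure of the conditions makes $H$ MinHom-NP-hard resp.\ min-NP-hard by Theorem~\ref{by_post}; (ii) cost transport --- a MinHom-pair supplies two functions in $R$ with opposite gaps on $\{a,b\}$, which up to additive constants and a positive global factor generate all objectives $\sum_v(\alpha_v e_a+\beta_v e_b)$, while a min-pair supplies only nonnegative multiples of $e_a$, matching the min-objective; the first statement then follows as the contrapositive under the standing assumption $P\ne NP$. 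The one place I would tighten is the final weight bookkeeping: rather than vaguely ``inflating and absorbing'' the mandatory contributions of the remaining $r\in R$, note either that Definition~1 allows $w_{vr}=0$ (the paper's own reductions in Lemma~\ref{30} implicitly use zero weights), or that the hard boolean instances produced there (maximum independent set, Max-CUT) have uniform weights, so the leftover per-variable terms only apply a solution-independent affine rescaling to the objective and the reduction goes through unchanged.
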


As in case of $MinHom$, the necessary local conditions are not sufficient for
$R$-tractability of a conservative clone. Let $M = \left\{ {B|B \subseteq A, \left| B \right| = 2,F|_B {\rm{\,\,contains\,\,2\,\,different\,\,binary\,\,commutative\,\,functions}}} \right\}$, $M^o  = \left\{ {\left( {a,b} \right)|\left\{ {a,b} \right\} \in M} \right\}$ and $\overline M  = \left\{ {B|B \subseteq A, \left| B \right| = 2} \right\}\backslash M$.

Introduce an undirected graph without loops $T_F^R = \left( {M^o\cap E\left( R \right) ,P} \right)$ where $P = \left\{ {\left\langle {\left( {a,b} \right),\left( {c,d} \right)} \right\rangle |\left( {a,b} \right),\left( {c,d} \right) \in M^o\cap E\left( R \right) ,{\rm{\,\,there\,\,is\,\,no\,\,}}f  \in F:\mathop  \downarrow \limits_b^a \mathop  \downarrow \limits_d^c f } \right\}$.

\begin{theorem}
\label{main}
Suppose $F$ satisfy the necessary local conditions for $R$ and $UG\left( R \right)$ is complete. If the graph $T_F^R = \left( {M^o\cap E\left( R \right) ,P} \right)$ is bipartite, then $F$
is $R$-tractable. Otherwise, $F$ is $R$-NP-hard.
\end{theorem}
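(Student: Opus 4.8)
The plan is to prove the two directions separately, splitting on whether the graph $T_F^R$ is bipartite, and in each case to read the bipartition (or an odd cycle) as \emph{global} information about how the binary commutative operations living on the pairs of $M$ can be glued together. Throughout I use Theorem \ref{standard} to pass freely between $Inv(F)=\langle\Gamma\rangle$ and any finite $\Gamma'$ contained in it, so that it suffices to exhibit pp-definable gadgets for the hardness direction and a single global multimorphism for the tractable direction. I would also start by using the necessary local conditions together with completeness of $UG(R)$ to classify every two-element $B\subseteq A$: if $B\in M$ then $F|_B$ contains both conservative commutative binary operations $\downarrow$ and $\uparrow$, while if $B\in\overline M$ then $B$ carries either the majority-type operation of Theorem \ref{by_post} (from (1.b)/(2.b)) or, for a min-pair, a single admissible drop by (2.a).

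For the tractable direction, assume $T_F^R$ is bipartite and fix a $2$-colouring $M^o\cap E(R)=X\sqcup Y$. The absence of a $P$-edge inside the colour class $X$ means that every two oriented pairs of $X$ admit a \emph{common} dropping function in $F$; using that $F$ is a clone I would amalgamate these pairwise drops into one operation $f_X\in F$ dropping every pair of $X$ to its designated (second-coordinate) element, and likewise build $f_Y\in F$ for $Y$. From $f_X,f_Y$ I would assemble a symmetric tournament pair $(\sqcap,\sqcup)$ of conservative commutative binary operations on the $M$-pairs and combine it with the majority operations on the $\overline M$-pairs into a single multimorphism of the whole instance. Each cost function $r\in R$ is automatically submodular for this pair, since $\{\sqcap(x,y),\sqcup(x,y)\}=\{x,y\}$ forces $r(\sqcap(x,y))+r(\sqcup(x,y))=r(x)+r(y)$, and the constructed operations preserve every relation of $Inv(F)$. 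Tractability then follows by reduction to submodular-type minimisation (a min-cut argument for the symmetric-tournament-pair part, with the majority part handled by local consistency).

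For the hardness direction, suppose $T_F^R$ contains an odd cycle $(a_1,b_1),\dots,(a_{2k+1},b_{2k+1})$. Each edge $\langle(a_i,b_i),(a_{i+1},b_{i+1})\rangle\in P$ certifies that no single $f\in F$ drops both pairs, and I would turn this, via a pp-formula over $Inv(F)$ together with the $R$-weights, into a binary ``not-both-dropped'' constraint between the $\{a_i,b_i\}$- and $\{a_{i+1},b_{i+1}\}$-variables; because each $(a_i,b_i)\in M^o\cap E(R)$, the two semilattice operations and the preference edge guarantee that this relation and a cost separating the two elements are realizable. Threading these constraints around the odd cycle produces a frustrated system whose cheapest solutions encode a maximum cut (equivalently a maximum independent set, exactly as in the boolean reductions of Lemma \ref{30}); the odd length is precisely what makes the system non-$2$-colourable and forces this Max-CUT behaviour, giving $R$-NP-hardness.

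I expect the tractable direction to be the main obstacle, concentrated in two places. The first is the amalgamation step: converting \emph{pairwise} simultaneous droppability within a colour class, which is all that the absence of $P$-edges directly supplies, into a \emph{single} operation dropping the whole class, which requires a careful clone-closure/induction argument and is exactly where completeness of $UG(R)$ is needed to guarantee that every pair has been classified. The second is verifying that the assembled $(\sqcap,\sqcup)$ is genuinely a polymorphism of all of $Inv(F)$ and interacts consistently with the majority operations on the $\overline M$-pairs, so that the three kinds of pairs, namely those in $M^o\cap E(R)$, the cost-indifferent $M$-pairs, and the $\overline M$-pairs, never impose contradictory choices; bipartiteness of $T_F^R$ is precisely the hypothesis ruling out such a contradiction.
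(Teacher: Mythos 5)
Your overall architecture for the tractable direction matches the paper's: use the bipartition of $T_F^R$ to amalgamate the pairwise simultaneous drops within each colour class into two global binary operations (the paper does exactly this induction in its Theorem~\ref{weak}), then combine them with arithmetical/majority behaviour on the $\overline M$-pairs and invoke a known tractable class (the paper routes this through multi-sorted relations and Theorem~\ref{multi}, i.e.\ the $MinHom$ dichotomy of \cite{takhan}, rather than a direct min-cut argument). The hardness direction is omitted in the paper (deferred to \cite{takhan}), so your odd-cycle/Max-CUT sketch cannot be checked against it, but it is only an outline: the existence of the pp-definable ``not-both-dropped'' gadget is the entire content of that direction and you do not construct it.

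The genuine gap is in your treatment of the $M$-pairs in the tractable direction. You claim the two assembled operations form a \emph{symmetric} tournament pair, so that $\{\sqcap(x,y),\sqcup(x,y)\}=\{x,y\}$ always and each $r\in R$ satisfies the multimorphism inequality with equality ``automatically.'' This fails for min-pairs: if $\{a,b\}\in M$ but only $(a,b)\in E(R)$, then only the orientation $(a,b)$ is a vertex of $T_F^R$, so bipartiteness guarantees that one of your operations drops $a$ to $b$ but places no constraint whatsoever on the other one, which may also drop $a$ to $b$. The graph records no simultaneous-droppability information about orientations outside $E(R)$, so you cannot in general force the second operation to raise on such a pair without breaking it elsewhere. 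The paper copes with this by (i) defining a \emph{weak} tournament pair that explicitly permits $\mathop\downarrow\limits_b^a\phi$, $\mathop\downarrow\limits_b^a\psi$ in this case, (ii) proving the cost inequality $r(a)+r(b)\ge r(\phi(a,b))+r(\psi(a,b))$ using precisely that $(b,a)\notin E(R)$ means $r(b)\le r(a)$ for every $r\in R$ (Lemma~\ref{multimorpism}) --- note this is where the hypothesis on $R$ actually enters, whereas your argument never uses $R$ nontrivially --- and (iii) a preprocessing step absent from your proposal: compute the sets $A_v$ of attainable values by $3$-consistency (using Bulatov's conservative CSP result) and iteratively delete every $a\in A_v$ dominated by some $b\in A_v$ with $\mathop\downarrow\limits_b^a\phi$, $\mathop\downarrow\limits_b^a\psi$ (Lemma~\ref{delete}). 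Only after this domain reduction do the restrictions of $\phi,\psi$ become a genuine tournament pair on the surviving multi-sorted domains, at which point Theorem~\ref{multi} applies. Without steps (ii) and (iii) your ``automatic submodularity'' rests on a false premise, and the reduction to the symmetric-tournament-pair tractable class does not go through.
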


A proof for NP-hard case of Theorem \ref{main} will be omitted since it is basically the same as in case of $MinHom$ \cite{takhan}.

\section{Multi-sorted MinHom and its tractable case}
\begin{definition}
For any collection of sets ${\sf A} = \left\{A_i| i\in I\right\}$, and any list of
indices $i_1,\dots, i_m\in I$, a subset $\rho$ of $A_{i_1}\times\dots\times A_{i_m}$, together with
the list $\left(i_1,\dots, i_m\right)$, will be called {\em a multi-sorted relation over ${\sf A}$ with arity $m$
and signature $\left(i_1,\dots, i_m\right)$}. For any such relation $\rho$, the signature of $\rho$ will be
denoted $\sigma (\rho)$.
\end{definition}

\begin{definition}
Let $\Gamma$ be a set of multi-sorted relations over a collection of sets
${\sf A} = \left\{A_i| i\in I\right\}$. The multi-sorted $MinHom$ problem over $\Gamma$,
denoted $MMinHom(\Gamma)$, is defined to be the minimization problem with

\noindent {\bf Instance:} A quadruple $\left(V, \delta, {\sf C}, W\right)$ where
\begin{itemize}
\item $V$ is a set of variables;
\item $\delta$ is a mapping from $V$ to $I$, called the domain function;
\item ${\sf C}$ is a set of constraints, where each constraint $C \in {\sf C}$ is a pair $(s,\rho)$,
such that
\begin{itemize}
\item $s = \left(v_1,\dots, v_m\right)$ is a tuple of variables of length $m$, called the
constraint scope;
\item $\rho$ is an element of $\Gamma$ with arity $m$ and signature $\left(\delta(v_1),\dots,\delta(v_m)\right)$,
called the constraint relation.
\end{itemize}
\item Weights $w_{va}\in \bbbn, v\in V, a\in A_{\delta(v)}$.
\end{itemize}

\noindent {\bf Solution:} A function $f$ from $V$ to $\bigcup \limits_{i\in I} A_i$,
such that, for each variable $v\in V$, $f(v)\in A_{\delta(v)}$, and for each constraint
$(s,\rho)\in {\sf C}$, with $s = \left(v_1,\dots, v_m\right)$, the tuple $\left(f(v_1),\dots,f(v_m)\right)$ belongs to $\rho$.

\noindent {\bf Measure:} $\sum\limits_{v\in V} {w_{vf\left( v \right)}}$.
\end{definition}

We can consider any multi-sorted relation $\rho$ over ${\sf A} = \left\{A_i| i\in I\right\}$ as an ordinary relation $\rho^{\sf A}$ over a set $\bigcup \limits_{i\in I}A_i$ where $A_i, i\in I$ are considered to be disjoint. If $\Gamma$ is a set of multi-sorted relations over ${\sf A} = \left\{A_i| i\in I\right\}$, then $\Gamma^{\sf A}$ denotes a set of relations of $\Gamma$ considered as relations over $\bigcup \limits_{i\in I}A_i$. It is easy to see that $MMinHom(\Gamma)$ is equivalent to $MinHom(\Gamma^{\sf A})$.

\begin{definition}
A set of multi-sorted relations over ${\sf A}$, $\Gamma$, is said to be {\em MinHom-tractable}, if $\Gamma^{\sf A}$ is MinHom-tractable.
A set of multi-sorted relations over ${\sf A}$, $\Gamma$, is said to be {\em MinHom-NP-complete}, if $\Gamma^{\sf A}$
is MinHom-NP-complete.
\end{definition}

\begin{definition}
Let ${\sf A}$ be a collection of sets. {\em An $n$-ary multi-sorted operation
$t$ on ${\sf A}$} is defined by a collection of interpretations $\left\{t^A | A \in {\sf A}\right\}$, where each
$t^A$ is an $n$-ary operation on the corresponding set $A$. The multi-sorted operation
$t$ on $A$ is said to be {\em a polymorphism} of a multi-sorted relation $\rho$ over ${\sf A}$ with
signature $(\delta(1),\dots,\delta(m))$ if, for any $(a_{11},\dots,a_{m1}),\dots, (a_{1n},\dots,a_{mn}) \in \rho$, we
have
\[
t\left( {\begin{array}{*{20}{c}}
   {{a_{11}}} &  \cdots  & {{a_{1n}}}  \\
    \vdots  & {} &  \vdots   \\
   {{a_{m1}}} &  \cdots  & {{a_{mn}}}  \\
\end{array}} \right) = \left( {\begin{array}{*{20}{c}}
   {{t^{\delta \left( 1 \right)}}\left( {{a_{11}}, \ldots ,{a_{1n}}} \right)}  \\
    \vdots   \\
   {{t^{\delta \left( m \right)}}\left( {{a_{m1}}, \ldots ,{a_{mn}}} \right)}  \\
\end{array}} \right) \in \rho
\]
\end{definition}

For any given set of multi-sorted relations $\Gamma$, $MPol\left(\Gamma\right)$ denotes the set of multi-sorted
operations which are polymorphisms of every relation in $\Gamma$.

\begin{definition}
Suppose a set of operations $H$ over $D$ is conservative and $B\subseteq \left\{ {\left\{ {x,y} \right\}|x,y \in D ,x \ne y} \right\}$. A pair of binary operations $\phi, \psi \in H$ is called a {\em tournament pair} on $B$,
if $\forall \left\{ {x,y} \right\} \in B {\rm{\,\,}}\phi \left( {x,y} \right) = \phi \left( {y,x} \right),\psi \left( {x,y} \right) = \psi \left( {y,x} \right), {\phi \left( {x,y} \right)\ne\psi \left( {x,y} \right)}$
and for arbitrary $\left\{ {x,y} \right\} \in \overline{B} $, $\phi \left( {x,y} \right) = x,\psi \left( {x,y} \right) = x$.
An operation  $m \in H$ is called {\em arithmetical} on $B$, if
$\forall \left\{ {x,y} \right\} \in B {\rm{\,\,}}m \left( {x,x,y} \right) = m \left( {y,x,x} \right) = m \left( {y,x,y} \right) = y$.
\end{definition}

The following theorem is a simple consequence of the main result of \cite{takhan}.

\begin{theorem}
\label{minhom}
Let $\Gamma$ be a constraint language over $A$ containing all unary relations and $B\subseteq \left\{\{a,b\}|a,b\in A, a\ne b\right\}$.
If $Pol\left(\Gamma\right)$ contains operations $\phi, \psi, m$ such that
\begin{itemize}
\item $\phi, \psi$ is a tournament pair on $B$,
\item $m$ is arithmetical on $\overline{B}$,
\end{itemize}
then $\Gamma$ is MinHom-tractable.
\end{theorem}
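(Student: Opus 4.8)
The plan is to derive the statement from the tractable half of the $MinHom$ dichotomy of \cite{takhan}, which in the terminology of this paper is Theorem~\ref{main} specialised to $R=R^{*}$. For $R^{*}=\{e_i\mid i\in A\}$ one has $e_a(a)=1>0=e_a(b)$ whenever $a\neq b$, so both $(a,b)$ and $(b,a)$ lie in $E(R^{*})$; hence $UG(R^{*})$ is complete, $E(R^{*})$ contains every ordered pair, and every two-element subset of $A$ is a $MinHom$-pair. It therefore suffices to show that $F=Pol(\Gamma)$ satisfies the necessary local conditions for $R^{*}$ and that the graph $T_F^{R^{*}}$ is bipartite, after which Theorem~\ref{main} yields $MinHom$-tractability of $\Gamma$.

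Verifying the local conditions is immediate. Let $\{a,b\}$ be a $MinHom$-pair. If $\{a,b\}\in B$ then, since $\phi,\psi$ form a tournament pair, $\phi(a,b)=\phi(b,a)$ and $\psi(a,b)=\psi(b,a)$ are two distinct elements of $\{a,b\}$, so $\{\phi,\psi\}$ supplies operations $f_1,f_2$ with $\mathop\downarrow\limits_b^a f_1$ and $\mathop\uparrow\limits_b^a f_2$; this is condition (1.a). If $\{a,b\}\in\overline B$ then $m$ is arithmetical on $\{a,b\}$, which is precisely condition (1.b). Hence $F$ satisfies the necessary local conditions for $R^{*}$.

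The core of the argument is the bipartiteness of $T_F^{R^{*}}$, whose vertex set is $M^{o}$. I would two-colour the vertices by means of $\phi$: call $(a,b)$ \emph{black} if $\phi(a,b)=b$ and \emph{white} if $\phi(a,b)=a$. For $\{a,b\}\in B$ this is well defined and assigns opposite colours to $(a,b)$ and $(b,a)$. The decisive observation is that two equally coloured vertices coming from $B$ are never adjacent in $P$: if $(a,b)$ and $(c,d)$ are both black then $\mathop\downarrow\limits_b^a\phi$ and $\mathop\downarrow\limits_d^c\phi$ hold, so $\phi$ itself realises $\mathop\downarrow\limits_b^a\mathop\downarrow\limits_d^c\phi$; if both are white then $\psi(a,b)=b$ and $\psi(c,d)=d$, so $\psi$ realises $\mathop\downarrow\limits_b^a\mathop\downarrow\limits_d^c\psi$. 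In either case $\langle(a,b),(c,d)\rangle\notin P$, so the subgraph of $T_F^{R^{*}}$ induced by the vertices arising from $B$ is bipartite.

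The step I expect to be the main obstacle is checking that this colouring accounts for \emph{all} of $M^{o}$. Since $M$ is defined through $F$ rather than through $B$, a pair $\{x,y\}\in\overline B$ may still carry two distinct commutative operations, and the corresponding vertices are not coloured by $\phi$. Here I would exploit that $m$ is in fact a Mal'cev operation on each such pair (from $m(x,x,y)=y$ and $m(x,y,y)=x$) and combine it with a commutative operation on $\{x,y\}$; the aim is to produce, for every prescribed downward target on a second pair, a single binary operation of $F$ realising the required $\mathop\downarrow\mathop\downarrow$ relation, thereby showing that the extra vertices are isolated in $P$ (or can otherwise be coloured consistently). This term construction, which must control the behaviour of one operation simultaneously on two unrelated pairs, is the genuinely technical part, since the restrictions of $F$ to single pairs do not determine its global terms. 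Once bipartiteness of $T_F^{R^{*}}$ is secured in full, the necessary local conditions together with Theorem~\ref{main} (equivalently, the main result of \cite{takhan}) yield that $\Gamma$ is $MinHom$-tractable.
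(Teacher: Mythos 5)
Your verification of the necessary local conditions for $R^{*}$ is correct (for $R^{*}$ every two-element set is a MinHom-pair, condition (1.a) is supplied by $\phi,\psi$ on $B$ and condition (1.b) by $m$ on $\overline B$), and so is the two-colouring of the vertices of $T_F^{R^{*}}$ that come from pairs in $B$. The genuine gap is exactly the one you flag, and it cannot be closed by showing that the vertices coming from $M\setminus B$ are isolated in $P$: they need not be. Take $B=\emptyset$ and $\phi=\psi$ the first projection, and consider a language containing all unary relations together with $\rho=\left\{(x,d),(y,c)\right\}$ for two disjoint pairs $\{x,y\},\{c,d\}$. The minority operation preserves $\rho$ (it is the graph of a bijection, i.e.\ an affine relation), so $m$ can be arithmetical on every pair; and one can arrange that $F=Pol(\Gamma)$ contains an operation $g$ with $\mathop \downarrow \limits_y^x g$ and $\mathop \downarrow \limits_c^d g$ and an operation $h$ with $\mathop \downarrow \limits_x^y h$ and $\mathop \downarrow \limits_d^c h$, while $\rho$ forbids any $f$ with $\mathop \downarrow \limits_y^x \mathop \downarrow \limits_d^c f$. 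Then $\{x,y\},\{c,d\}\in M\setminus B$ and the four vertices $(x,y),(y,x),(c,d),(d,c)$ form a $4$-cycle in $T_F^{R^{*}}$: bipartite, but with no vertex isolated and with the colouring forced by the edges rather than by $\phi$. Ruling out odd cycles through such vertices (and through mixed cycles meeting both $B$ and $M\setminus B$) is precisely the combinatorial core of the dichotomy of \cite{takhan}; your proposal does not supply it, and the term constructions needed are not routine.

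There is also a structural problem with the route. Inside this paper Theorem~\ref{main} is proved (Section~6) via Theorem~\ref{multi}, which is proved via Theorem~\ref{minhom}, so deriving Theorem~\ref{minhom} from Theorem~\ref{main} at $R=R^{*}$ is circular unless one leans entirely on the external reference; and in \cite{takhan} the tractable half of the dichotomy is itself obtained by first proving a statement of exactly the form of Theorem~\ref{minhom} (an algorithm for languages admitting a tournament pair plus an arithmetical operation, in the style of Theorem~8.3 of \cite{cohen}) and then reducing the bipartite case to it --- the same order of deduction as Theorems~\ref{weak} and \ref{arithmetical} followed by Section~6 here. The paper's intended justification of Theorem~\ref{minhom} is a direct citation of that algorithmic result, not a derivation from the bipartiteness criterion; going through the criterion forces you to re-prove the hard combinatorial lemma that the citation is meant to avoid.
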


The following theorem is a generalization of the previous one.

\begin{theorem}
\label{multi}
Let $\Gamma$ be a set of multi-sorted relations over a collection of finite
sets ${\sf A} = \left\{A_1,\dots,A_n\right\}$ containing all unary multi-sorted relations.
Assume that $B_i\subseteq \left\{\{a,b\}|a,b\in A_i, a\ne b\right\}$.
If $MPol\left(\Gamma\right)$ contains a multi-sorted operations $\phi, \psi, m$ such that
\begin{itemize}
\item $\phi^{A_i}, \psi^{A_i}$ is a tournament pair on $B_i$,
\item $m^{A_i}$ is arithmetical on $\overline{B_i}$,
\end{itemize}
then $\Gamma ^{\sf A}$ is MinHom-tractable.
\end{theorem}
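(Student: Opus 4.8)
The plan is to reduce the multi-sorted statement to the single-sorted Theorem \ref{minhom} by working over the disjoint union $A = \bigcup_{i=1}^{n} A_i$, following the equivalence of $MMinHom(\Gamma)$ with $MinHom(\Gamma^{\sf A})$ recorded above. First I would assemble the given multi-sorted operations into ordinary operations on $A$. For each $t\in\{\phi,\psi,m\}$, define $\hat t$ on $A$ by letting it agree with the interpretation $t^{A_i}$ on every tuple whose entries all lie in a single sort $A_i$, and on the remaining ``mixed'' tuples (entries drawn from more than one sort) define $\hat t$ conservatively according to the pattern chosen below. Since $\Gamma$ contains all unary multi-sorted relations, every interpretation $t^{A_i}$ preserves all subsets of $A_i$, hence is conservative; therefore each $\hat t$ is a conservative operation on $A$.

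The key observation is that the mixed-tuple values of $\hat t$ are irrelevant to the polymorphism property. A tuple of a relation $\rho^{\sf A}\in\Gamma^{\sf A}$ with signature $(\delta(1),\dots,\delta(m))$ has its $j$-th entry in $A_{\delta(j)}$, so when $\hat t$ is applied column-wise to several tuples of $\rho^{\sf A}$ every column is single-sorted and $\hat t$ acts there as $t^{A_{\delta(j)}}$. The defining property of the multi-sorted polymorphism $t$ then places the result back in $\rho$, i.e. in $\rho^{\sf A}$. Thus $\hat\phi,\hat\psi,\hat m\in Pol(\Gamma^{\sf A})$ no matter how the mixed tuples are treated, and because each $\hat t$ is conservative it also preserves every unary relation over $A$; hence $\hat\phi,\hat\psi,\hat m\in Pol(\Gamma^{\sf A}\cup U)$, where $U$ denotes the set of all unary relations over $A$.

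Next I would fix $B=\bigcup_{i} B_i$ (pairs taken inside the sorts) and choose the mixed-tuple values so that $\hat\phi,\hat\psi$ is a tournament pair on $B$ and $\hat m$ is arithmetical on $\overline B$. Every pair of distinct elements of $A$ is either a within-sort pair, lying in some $B_i$ or $\overline{B_i}$, or a cross-sort pair, which by construction lies in $\overline B$. On within-sort pairs the required identities are exactly the hypotheses that $\phi^{A_i},\psi^{A_i}$ is a tournament pair on $B_i$ and $m^{A_i}$ is arithmetical on $\overline{B_i}$. On a cross-sort pair $\{x,y\}\in\overline B$ I would set $\hat\phi(x,y)=\hat\psi(x,y)=x$ and impose the arithmetical identities $\hat m(x,x,y)=\hat m(y,x,x)=\hat m(y,x,y)=y$ together with the symmetric ones obtained by swapping $x$ and $y$; these prescriptions are mutually consistent, since a triple using exactly two distinct values involves only the single pair formed by those values, and they are conservative, so they extend $\hat\phi,\hat\psi,\hat m$ to well-defined conservative operations. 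With $\Gamma^{\sf A}\cup U$ containing all unary relations over $A$ and $Pol(\Gamma^{\sf A}\cup U)$ containing a tournament pair on $B$ together with an operation arithmetical on $\overline B$, Theorem \ref{minhom} yields MinHom-tractability of $\Gamma^{\sf A}\cup U$; the subset $\Gamma^{\sf A}$ is then MinHom-tractable as well.

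The step I expect to require the most care is the passage to $\Gamma^{\sf A}\cup U$. The disjoint union carries only the unary relations living inside individual sorts, not the subsets of $A$ straddling two sorts, so Theorem \ref{minhom} cannot be applied to $\Gamma^{\sf A}$ directly. Enlarging by all of $U$ repairs this, and the verification that the conservative operations $\hat\phi,\hat\psi,\hat m$ still preserve the straddling unary relations—immediate once one notes that on a cross-sort pair $\hat\phi,\hat\psi$ return the first argument and $\hat m$ is conservative—is precisely what makes the enlargement legitimate while leaving the tournament-pair/arithmetical data intact.
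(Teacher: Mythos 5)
Your proposal is correct and follows essentially the same route as the paper: glue the multi-sorted interpretations into conservative operations on the disjoint union $\bigcup_i A_i$, observe that the values on mixed tuples never affect the polymorphism property, and invoke Theorem \ref{minhom}. The only (harmless) difference is that the paper places the cross-sort pairs into the tournament-pair set $B$ (making $\phi',\psi'$ commutative and distinct there) while you place them into $\overline B$ (projections for $\hat\phi,\hat\psi$ and arithmetical identities for $\hat m$); both choices satisfy the hypotheses of Theorem \ref{minhom}, and your explicit passage to $\Gamma^{\sf A}\cup U$ cleanly handles the unary relations straddling two sorts, a point the paper leaves implicit.
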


\begin{proof}
Denote $G = \bigcup\limits_{i=1}^{n}{A_i}$. It is easy to see that we can define operations
$\phi',\psi' : G^2\rightarrow G$ and $m':G^3\rightarrow G$
such that $\phi'|_{A_i} = \phi^{A_i}$, $\psi'|_{A_i} = \psi^{A_i}$, $m'|_{A_i} = m^{A_i}$ and $\phi', \psi'$ is a tournament pair on
$\bigcup\limits_{i=1}^{n}{B_i} \cup \left\{\{a,b\}|a\in A_i,b\in A_j, i\ne j\right\}$. Then $\phi',\psi', m' \in Pol\left(\Gamma^{\sf A}\right)$
which by Theorem \ref{minhom} means that $\Gamma ^{\sf A}$ is MinHom-tractable.
\end{proof}

\section{Structure of $R$-tractable algebras}
\begin{definition}
Suppose a set of operations $H$ over $D$ is conservative and $O\subseteq \left\{ {\left( {x,y} \right)|x,y \in D ,x \ne y} \right\}$. A pair of binary operations $\phi, \psi \in H$ is called a {\em weak tournament pair} on $O$,
if
\begin{itemize}
\item[(1)]
$\forall\,\,a,b {\rm{\,\,such\,\,that\,\,}}\left( {a,b} \right),\left( {b,a} \right) \in O: {\rm{\,\,}}\mathop \downarrow \limits_b^a \phi, \mathop \uparrow \limits_b^a \psi{\rm{\,\,or\,\,}}\mathop \uparrow \limits_b^a \phi, \mathop \downarrow \limits_b^a \psi$;
\item[(2)]
$\forall\,\,a,b {\rm{\,\,such\,\,that\,\,}}\left( {a,b} \right)\in O,\left( {b,a} \right) \notin O: {\rm{\,\,}}\mathop \downarrow \limits_b^a \phi, \mathop \uparrow \limits_b^a \psi{\rm{\,\,or\,\,}}\mathop \uparrow \limits_b^a \phi, \mathop \downarrow \limits_b^a \psi {\rm{\,\,or\,\,}}
\mathop \downarrow \limits_b^a \phi, \mathop \downarrow \limits_b^a \psi$;
\item[(3)]
$\forall\,\,a,b {\rm{\,\,such\,\,that\,\,}}\left( {a,b} \right)\notin O,\left( {b,a} \right) \notin O: {\rm{\,\,}}\phi|_{\left\{ {a,b} \right\}} \left( {x,y} \right) = x, \psi|_{\left\{ {a,b} \right\}} \left( {x,y} \right) = y$.
\end{itemize}
\end{definition}

For a binary operation $f\in F$ define $Com(f) = \left\{\{a,b\}: f|_{\{a,b\}}\rm{\,\,is\,\,commutative}\right\}$.
Consider any binary operation $f_{max}\in F$ which has maximal set $Com(f_{max})$,
i.e. there is no $f\in F$ such that $Com(f_{max})\subset Com(f)$. Since for any binary operations $a,b$,
$Com(a\left(x,y\right))\cup Com(b\left(x,y\right)) = Com(a\left(b\left(x,y\right),b\left(y,x\right)\right))$, we conclude that
$Com(f_{max}) = \bigcup\limits_{f\in F} {Com(f)}$. Define $Com^{o}(f_{max}) = \left\{ {\left( {a,b} \right)|\left\{ {a,b} \right\}\in Com(f_{max})} \right\}$.

\begin{theorem}
\label{weak}
If $F$ satisfies the necessary local conditions for $R$, $UG\left( R \right)$ is complete and $T_F^R = \left( {M^o\cap E\left( R \right) ,P} \right)$ is bipartite,
then there is a pair $\phi, \psi \in F$ which is a weak tournament pair on $Com^{o}(f_{max}) \cap E\left( R \right)$.
\end{theorem}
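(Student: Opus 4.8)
The plan is to build two operations $\phi,\psi\in F$ with $Com(\phi)=Com(\psi)=Com(f_{max})$, commutative on every pair of $Com(f_{max})$ and acting as (opposite) projections elsewhere, whose two orientations on each commutative pair are dictated by the bipartition of $T_F^R$; the three clauses of the weak tournament pair then fall out of a pair-by-pair bookkeeping. First I would record three local reductions. If $\{a,b\}\notin Com(f_{max})$ then, by conservativeness and $Com(f_{max})=\bigcup_{f\in F}Com(f)$, every binary $f\in F$ restricts to a projection on $\{a,b\}$, so clause (3) only asks that $\phi,\psi$ be the two \emph{opposite} projections there. Next, a MinHom-pair $\{a,b\}$ lying in $Com(f_{max})$ is in fact in $M$: if only $\mathop\downarrow\limits_b^a d$ were available, then local condition (1.a) fails, so (1.b) supplies an arithmetical $m$ on $\{a,b\}$, and the one-line computation $g(x,y)=m\bigl(x,d(x,y),y\bigr)$ yields $\mathop\uparrow\limits_b^a g$, so both orientations lie in $F$, a contradiction. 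The symmetric identity (building $\mathop\downarrow$ from $\mathop\uparrow$ and $m$) shows that a min-pair $(a,b)\in E(R)$ in $Com(f_{max})\setminus M$ must carry $\mathop\downarrow\limits_b^a$ (the only-$\mathop\uparrow$ alternative would force an arithmetical operation and hence $\mathop\downarrow$, again contradicting $\{a,b\}\notin M$); on such pairs I set $\phi|_{\{a,b\}}=\psi|_{\{a,b\}}=\mathop\downarrow\limits_b^a$, an admissible option of clause (2).

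I would then invoke bipartiteness. Fix a $2$-colouring of the vertex set $M^o\cap E(R)$ of $T_F^R$ as $\Phi\sqcup\Psi$. For a MinHom-pair $\{a,b\}\in M$ both $(a,b)$ and $(b,a)$ are vertices, and since no $f$ satisfies $\mathop\downarrow\limits_b^a\mathop\downarrow\limits_a^b f$ they are $P$-adjacent, hence receive opposite colours; reading this as "$\phi$ mins the orientation lying in $\Phi$ and $\psi$ the one in $\Psi$" makes $\phi,\psi$ commutative with opposite values there, which is clause (1). For a min-pair $\{a,b\}\in M$ with $(a,b)\in E(R)$ only $(a,b)$ is a vertex: the colour-assigned operation is required to min toward $b$ while the other is required merely to be commutative on $\{a,b\}$ (either orientation), compatible with clause (2) and leaving a free choice I keep for the amalgamation below. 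Thus the colouring determines a partial orientation datum $o_\phi,o_\psi$ on $Com(f_{max})$, fully fixed on MinHom-pairs of $M$ and on $Com(f_{max})\setminus M$, and partly free on min-pairs of $M$.

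The heart of the argument, and the step I expect to be the main obstacle, is \emph{realizability}: producing a single $\phi\in F$ with $Com(\phi)=Com(f_{max})$ whose orientation matches $o_\phi$ everywhere, and likewise $\psi$. The basic tool is the composition $a\ast b(x,y)=a\bigl(b(x,y),b(y,x)\bigr)$, already noted to satisfy $Com(a\ast b)=Com(a)\cup Com(b)$, which reproduces $b$'s value on $Com(b)$ and $a$'s value on $Com(a)\setminus Com(b)$; starting from $f_{max}$ and splicing in, pair by pair, operations that flip the orientation wherever $o_\phi$ disagrees with $f_{max}$, one assembles $\phi$ while holding the commutativity domain fixed at $Com(f_{max})$. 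The difficulty is that a flipping operation furnished by $\{a,b\}\in M$ may have uncontrolled commutative behaviour on \emph{other} pairs, so splicing it in can corrupt orientations already set; this is precisely the obstruction recorded by $P$, since $(a,b),(c,d)$ are jointly realizable iff some $f$ satisfies $\mathop\downarrow\limits_b^a\mathop\downarrow\limits_d^c f$, i.e.\ iff they are \emph{not} $P$-adjacent. The plan is therefore to show that the assignment coming from the colouring is globally consistent (free of conflicting cyclic constraints) and that bipartiteness of $T_F^R$ is exactly this consistency, so that the pairwise compatibility holding inside each colour class can be amalgamated, by induction over the pairs using $\ast$, into one operation realizing the whole of $o_\phi$; the leftover freedom on min-pairs of $M$ and the opposite-projection requirement of clause (3) on pairs outside $Com(f_{max})$ are absorbed into the same composition bookkeeping.

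Finally, with $\phi$ and $\psi$ constructed I would verify the three defining clauses directly from the case analysis: clause (3) on pairs outside $Com(f_{max})$, where both operations are the prescribed opposite projections; clause (1) on MinHom-pairs of $M$, where the opposite colours give opposite commutative values; and clause (2) on min-pairs, in both the $M$-case (the colour-assigned operation mins to $b$, the other is commutative) and the $Com(f_{max})\setminus M$-case (both min to $b$). Since $Com^o(f_{max})\cap E(R)$ consists exactly of the ordered pairs governed by clauses (1)--(2), this exhibits $\phi,\psi$ as a weak tournament pair on $Com^o(f_{max})\cap E(R)$, as required.
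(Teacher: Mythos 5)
Your overall architecture matches the paper's: one operation per colour class of the bipartition of $T_F^R$, then composition with $f_{max}$ to force commutativity on all of $Com(f_{max})$, then a final adjustment to obtain opposite projections outside $Com(f_{max})$; your pair-by-pair verification of clauses (1)--(3) is also essentially the paper's. The genuine gap is the step you yourself flag as the main obstacle: the amalgamation. Bipartiteness only tells you that any \emph{two} vertices $(a,b),(c,d)$ in the same colour class admit some $f\in F$ with $\mathop \downarrow \limits_b^a \mathop \downarrow \limits_d^c f$; you still must upgrade this pairwise realizability to a \emph{single} $\phi'\in F$ realizing all the prescribed orientations of the class simultaneously, and you only announce a plan (``by induction over the pairs using $\ast$'') without giving the induction step. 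The tool you propose, $a\ast b(x,y)=a\bigl(b(x,y),b(y,x)\bigr)$, does not by itself do the job, for exactly the reason you identify: if $b$ is commutative with the wrong orientation on a pair already handled by $a$, then $a\ast b$ inherits $b$'s wrong value there, and nothing in your sketch rules this out or repairs it. Asserting that ``bipartiteness is exactly this consistency'' conflates pairwise compatibility with joint realizability; the implication from the former to the latter is precisely what has to be proved.

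The paper closes this gap with a different composition. Given $(a_1,b_1),\dots,(a_{n+1},b_{n+1})$ in one colour class, the induction hypothesis supplies $\phi_1,\phi_2,\phi_3$, each handling $n$ of the $n+1$ pairs (omitting the first, the second, and the last, respectively), and then $\phi_3\bigl(\phi_1(x,y),\phi_2(x,y)\bigr)$ handles all $n+1$: on a pair handled by both $\phi_1$ and $\phi_2$ the two inner operations both output $b_i$ and conservativeness finishes; on the pair omitted by $\phi_1$ (resp.\ $\phi_2$) the other inner operation still outputs $b_i$, and $\phi_3$'s own $\mathop \downarrow \limits_{b_i}^{a_i}$ behaviour absorbs the remaining coordinate. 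This three-operation trick is the real content of the theorem --- it is what converts ``pairwise non-adjacent in $T_F^R$'' into ``jointly realizable'' --- and your proposal needs it (or an equivalent device) to be complete. The rest of your outline (MinHom-pairs inside $Com(f_{max})$ lie in $M$ and their two orientations receive opposite colours; min-pairs in $Com(f_{max})\setminus M$ force $\mathop \downarrow \limits_b^a$ on the unique commutative operation; the final opposite-projection fix outside $Com(f_{max})$) is sound and agrees with the paper.
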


\begin{proof}
Let $M_1, M_2$ denote a partitioning of vertices of the bipartite graph $T_F^R$.
Then, for every $\left( {a,b} \right),\left( {c,d} \right) \in M_1  $,
there is a function $\phi\in F :\mathop  \downarrow \limits_b^a \mathop  \downarrow \limits_d^c \phi $.
Let us prove by induction that for every
$\left( {a_1 ,b_1 } \right),\left( {a_2 ,b_2 } \right),\dots,\left( {a_n ,b_n } \right) \in M_1  $,
there is a $\phi :\mathop  \downarrow \limits_{b_1 }^{a_1 } \mathop  \downarrow \limits_{b_2 }^{a_2 } \dots\mathop  \downarrow \limits_{b_n }^{a_n } \phi $.

The base of induction $n = 2$ is obvious.
Let $\left( {a_1 ,b_1 } \right)$, $\left( {a_2 ,b_2 } \right)$, $\dots$, $\left( {a_{n + 1} ,b_{n + 1} } \right) \in M_1  $ be given.
By the induction hypothesis, there are  $
\phi _1 ,\phi _2 ,\phi _3\in F :\mathop  \downarrow \limits_{b_2 }^{a_2 } \dots\mathop  \downarrow \limits_{b_n }^{a_n } \mathop  \downarrow \limits_{b_{n + 1} }^{a_{n + 1} } \phi _1 ,\mathop  \downarrow \limits_{b_1 }^{a_1 } \mathop  \downarrow \limits_{b_3 }^{a_3 } \dots\mathop  \downarrow \limits_{b_n }^{a_n } \mathop  \downarrow \limits_{b_{n + 1} }^{a_{n + 1} } \phi _2 ,\mathop  \downarrow \limits_{b_1 }^{a_1 } \mathop  \downarrow \limits_{b_2 }^{a_2 } \dots\mathop  \downarrow \limits_{b_n }^{a_n } \phi _3$.
Then, it is easy to see that
$\mathop  \downarrow \limits_{b_1 }^{a_1 } \dots\mathop  \downarrow \limits_{b_n }^{a_n } \mathop  \downarrow \limits_{b_{n + 1} }^{a_{n + 1} } \phi _3 \left( {\phi _1 \left( {x,y} \right),\phi _2 \left( {x,y} \right)} \right)$ which completes the induction proof.

The analogous statement can be proved for $M_2$.
So it follows from the proof that there are binary operations
$\phi ',\psi '\in F$, such that
$\forall \left( {x,y} \right) \in M_1 {\rm{: }}\mathop  \downarrow \limits_y^x \phi '$ and $\forall \left( {x,y} \right) \in M_2 {\rm{: }}\mathop  \downarrow \limits_y^x \psi '$.

If $\left( {a,b} \right),\left( {b,a} \right) \in Com^{o}(f_{max}) \cap E\left( R \right)$, then the necessary local conditions for $C$ give that
$\{a,b\}\in M$. Moreover, $\left( {a,b} \right),\left( {b,a} \right) \in M^o \cap E\left( R \right)$
are always in different partitions of $T_F^R$.
Consequently, $\phi ',\psi '$ satisfy the first property of a weak tournament pair on  $Com^{o}(f_{max}) \cap E\left( R \right)$.
Our goal is to construct a pair of operations that satisfy other two properties.

Consider operations $\phi ''\left(x,y\right) = f_{max}\left(\phi '\left(x,y\right),\phi '\left(y,x\right)\right)$ and
$\psi ''\left(x,y\right) = f_{max}\left(\psi '\left(x,y\right),\psi '\left(y,x\right)\right)$.
Since for any $B\in Com(\phi ')$, $\phi ''|_{B} = \phi '|_{B}$ and $\psi ''|_{B} = \psi '|_{B}$, we obtain that $\phi '',\psi ''$ satisfy the
first property of a weak tournament pair on  $Com^{o}(f_{max}) \cap E\left( R \right)$, too.
Moreover, $\phi '',\psi ''$ are commutative on $Com(f_{max})$.
For any two elements $a,b\in A$ such that $\left( {a,b} \right) \in Com^{o}(f_{max})\cap E\left( R \right), \left( {b,a} \right)\notin Com^{o}(f_{max})\cap E\left( R \right)$,
$\phi ''|_{\left\{ {a,b} \right\}},\psi ''|_{\left\{ {a,b} \right\}}$ are both commutative. If $(a,b)$ is a vertex in $T_F^R$, then one of
$\phi ',\psi '$ satisfies $\mathop \downarrow \limits_b^a$, and therefore it holds for one of $\phi '',\psi ''$, too. If $(a,b)$ is not a vertex
in $T_F^R$, i.e. $\{a,b\}\notin M$, then $\mathop \downarrow \limits_b^a\phi '', \mathop \downarrow \limits_b^a\psi ''$, because $F|_{\{a,b\}}$ contains only one commutative operation.
In both cases the second property of a weak tournament pair on  $Com^{o}(f_{max})\cap E\left( R \right)$ is satisfied.

In case that $(a,b), (b,a)\notin Com^{o}(f_{max})\cap E\left( R \right)$, we see that $\{a,b\}\notin Com(f_{max})$ (the case when $\{a,b\}\in Com(f_{max}), (a,b), (b,a)\notin E\left( R \right)$ is impossible due to completeness of $UG\left( R \right)$) which means that $\phi ''_{\left\{ {a,b} \right\}},\psi ''_{\left\{ {a,b} \right\}}$ are projections. Thus, a pair of operations $\phi \left( {x,y} \right) = \phi ''\left( {x,\phi ''\left( {y,x} \right)} \right)$, $\psi \left( {x,y} \right) = \psi ''\left( {\psi ''\left( {y,x} \right),y} \right)$
satisfy all three properties of  a weak tournament pair on $Com^{o}(f_{max}) \cap E\left( R \right)$.
\end{proof}

\begin{theorem}
\label{arithmetical}
If $F$ satisfies the necessary local conditions for $R$ and
$\overline {Com(f_{max})}  \ne \emptyset $, then $F$ contains an arithmetical operation on $\overline {Com(f_{max})}$.
\end{theorem}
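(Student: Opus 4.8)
The plan is to first harvest a \emph{local} arithmetical operation on every pair of $\overline{Com(f_{max})}$ and then glue these into one operation arithmetical on all of $\overline{Com(f_{max})}$ simultaneously. For the first part, fix $B=\{a,b\}\in\overline{Com(f_{max})}$. Since $Com(f_{max})=\bigcup_{f\in F}Com(f)$, the condition $B\notin Com(f_{max})$ means no binary operation of $F$ is commutative on $B$. Because $UG(R)$ is complete, $B$ is either a MinHom-pair or a min-pair of $G(R)$, so one of the necessary local conditions applies. Its first alternative, (1.a) or (2.a), produces an operation $f$ with $\mathop{\downarrow}\limits_b^a f$ (and, for a MinHom-pair, also one with $\mathop{\uparrow}\limits_b^a f$); any such operation is commutative on $\{a,b\}$ and hence cannot exist. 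Therefore the second alternative, (1.b) or (2.b), must hold, and it yields an operation $m_B\in F$ that is arithmetical on $B$ (the defining identity being symmetric in $a,b$, so the direction of the arc is immaterial).

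Next I would pin down the structure of $F$ on such a pair. Conservativity together with the absence of a commutative binary operation forces every binary operation of $F$ to restrict to a projection on $B$; identifying $B$ with $\{0,1\}$, Post's classification \cite{post} then places $F|_B$ inside the clone of self-dual idempotent operations. On the other hand $m_B$ restricts on $B$ to the two-element arithmetical operation $d$ (it returns its first argument when the outer two coincide and the remaining element otherwise), and on two elements $d$ already generates the whole self-dual idempotent clone: $m_B(x,m_B(x,y,z),z)$ is the majority, and the minority and the other discriminators arise by one further composition. Hence $F|_B$ \emph{equals} the self-dual idempotent clone for every $B\in\overline{Com(f_{max})}$: on each pair $F$ can realise any projection, the majority, the minority or any discriminator, but a single operation realises a prescribed behaviour on \emph{all} pairs only subject to how the pairs are linked.

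The gluing I would organise around one reduction. Suppose that for each $B$ I can produce a \emph{gadget} $q_B\in F$ that is arithmetical on $B$ and equals the middle projection $\pi_2$ on every other pair of $\overline{Com(f_{max})}$. Enumerating $\overline{Com(f_{max})}=\{B_1,\dots,B_k\}$ and setting $g_0=\pi_2$, $g_i(x,y,z)=q_{B_i}(x,g_{i-1}(x,y,z),z)$, an easy induction shows $g_i$ is arithmetical on $B_1,\dots,B_i$ and equals $\pi_2$ on every other pair: on any pair other than $B_i$ the outer factor is $\pi_2$, so the middle argument passes through unchanged, while on $B_i$ the factor $g_{i-1}$ is $\pi_2$ and so contributes its middle argument $y$, after which $q_{B_i}$ supplies the arithmetical value $d(x,y,z)$. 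Thus $g_k$ is the desired operation, and the whole theorem reduces to building the gadgets $q_B$.

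The construction of the gadgets is the main obstacle, and it is exactly the point where the restrictions of the available operations to \emph{other} pairs---uncontrolled self-dual operations---get in the way. No single composition formula normalises them: applied to $m_B$, the map $u\mapsto u(x,u(x,y,z),z)$ collapses a projection, the majority, the minority and two of the three discriminators to a projection or to the majority, but it reproduces the remaining discriminator unchanged, so the eight possible restrictions cannot be cleaned at once by one term. For the same reason one cannot even assemble a global majority by symmetric (cyclic) voting, since $\mathrm{maj}\bigl(s(x,y,z),s(y,z,x),s(z,x,y)\bigr)$ turns a discriminator restriction $s$ into the minority. The gadgets must therefore be built by an induction that normalises one pair at a time, using that $F|_{B'}$ is the \emph{full} self-dual idempotent clone to furnish a correcting operation on the pair currently being treated, and controlling its interference with the finitely many already-normalised pairs through the invariant relations that link the pairs of $\overline{Com(f_{max})}$. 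This inductive normalisation, carried out as in the minimum cost homomorphism case of \cite{takhan}, is the technical heart of the argument; once the gadgets are in hand, the reduction above finishes the proof. (Equivalently one may phrase the target through Pixley's identity $m(x,y,z)=p(x,M(x,y,z),z)$, seeking a global majority $M$ and a global Mal'cev $p$ on $\overline{Com(f_{max})}$---but producing either of these runs into the very same normalisation problem.)
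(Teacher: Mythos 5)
There is a genuine gap: you reduce the theorem to the existence of gadgets $q_B\in F$ that are arithmetical on $B$ and equal to $\pi_2$ on every other pair of $\overline{Com(f_{max})}$, you then state that constructing these gadgets is ``the main obstacle'' and ``the technical heart,'' and you do not construct them. So the load-bearing step is missing. Moreover, the gadget requirement is stronger than what is needed, and the difficulty you describe (having to normalise the uncontrolled restrictions of $m^B$ on all other pairs) is an artifact of your reduction. The paper's induction avoids it entirely: enumerate $\overline{Com(f_{max})}=\{B_1,\dots,B_s\}$, take $m_1=m^{B_1}$, and given $m_k$ arithmetical on $B_1,\dots,B_k$ but not on $B_{k+1}$, observe that one of the binary identifications $m_k(x,x,y)$, $m_k(y,x,x)$, $m_k(y,x,y)$ restricted to $B_{k+1}$ is the ``wrong'' projection; in the first case set
$m_{k+1}(x,y,z)=m_k\bigl(m^{B_{k+1}}(x,y,z),\,m^{B_{k+1}}(x,y,z),\,m_k(x,y,z)\bigr)$.
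On $B_{k+1}$ the outer $m_k(u,u,v)=u$ selects $m^{B_{k+1}}$, which is arithmetical there; on each old pair $B_i$ the identity $m_k(u,u,v)=v$ (valid for \emph{all} $u,v\in B_i$ because $m_k$ is arithmetical on $B_i$) absorbs whatever $m^{B_{k+1}}$ does on $B_i$ and returns $m_k(x,y,z)$. The other two cases are symmetric. Thus no control over $m^{B_{k+1}}$ outside $B_{k+1}$ is needed, and no appeal to an external normalisation from \cite{takhan} is required.

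Two smaller points. Your first step (each $B\in\overline{Com(f_{max})}$ admits a local arithmetical operation because $\mathop{\downarrow}\limits_b^a f$ forces commutativity on $\{a,b\}$, so alternatives (1.a)/(2.a) are unavailable and (1.b)/(2.b) must hold) is correct and is exactly what the paper uses. But the structural digression is wrong: the arithmetical identities on a two-element set force $m_B|_B$ to be the minority $x\oplus y\oplus z$, so $m_B(x,m_B(x,y,z),z)$ restricts to the middle projection, not the majority, and the clone generated on $B$ is the idempotent affine clone, not the full self-dual idempotent clone. This error is not load-bearing, but the claim that $F|_B$ ``equals the self-dual idempotent clone'' should be removed.
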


\begin{proof}
Obviously, for every $B \in \overline{Com(f_{max})} $, $F|_B $ cannot contain any commutative binary function.
Therefore, every binary function in $F|_B $ is a projection.

For $B \in \overline{Com(f_{max})}$, let $m^B$ be an arithmetical function on $B$; existence of this function follows from the necessary local conditions for $R$.
Assume now that $\overline{Com(f_{max})}  = \left\{ {\left\{ {x_1 ,y_1 } \right\},\dots,\left\{ {x_s ,y_s } \right\}} \right\}$.
We prove by induction that for every $r \le s$, $F$ contains a function $m_r :A^3  \to A$ that is arithmetical on
$\left\{ \left\{ {x_i ,y_i } \right\} | 1\leq i \leq r\right\}$.

When $r = 1$, $m_1 \left( {x,y,z} \right) = m^{\left\{ {x_1 ,y_1 } \right\}} \left( {x,y,z} \right)$ and the statement is
obviously true.
Suppose it is true for $r \le k < s$ and that we have the function $m_k :A^3  \to A$.
Let us prove the statement for $r = k + 1$. If $m_k$ is arithmetical on $\left\{\left\{ {x_{k + 1} ,y_{k + 1} } \right\}\right\}$, then we define $m_{k + 1}  \buildrel \Delta \over = m_k $
and the statement is proved. Otherwise, one of the following three statements is true
\[
\exists x,y \in \left\{ {x_{k + 1} ,y_{k + 1} } \right\}{\rm{ }}\left[ {m_k \left( {x,x,y} \right) \ne y} \right],
\]
\[
\exists x,y \in \left\{ {x_{k + 1} ,y_{k + 1} } \right\}{\rm{ }}\left[ {m_k \left( {y,x,x} \right) \ne y} \right],
\]
\[
\exists x,y \in \left\{ {x_{k + 1} ,y_{k + 1} } \right\}{\rm{ }}\left[ {m_k \left( {y,x,y} \right) \ne y} \right].
\]

Suppose the first case holds (the proof for other cases is analogous), i.e.
$m_k |_{\left\{ {x_{k + 1} ,y_{k + 1} } \right\}} \left( {x,x,y} \right)$ is the $x$-projection.
It is easy to see that the function $m_{k + 1} \left( {x,y,z} \right) = m_k \left( {m^{\left\{ {x_{k + 1} ,y_{k + 1} } \right\}} \left( {x,y,z} \right),m^{\left\{ {x_{k + 1} ,y_{k + 1} } \right\}} \left( {x,y,z} \right),m_k \left( {x,y,z} \right)} \right)$
is arithmetical on $\left\{ \left\{ {x_i ,y_i } \right\} | 1\leq i \leq k+1\right\}$.

Induction completed and it is clear that $m_s \left( {x,y,z} \right)$ satisfies the condition of theorem.
\end{proof}

\section{Proof of Theorem \ref{main}}
Suppose we have some instance $\left(V, {\sf C}, W\right)$ of $MinHom_{R}\left(Inv\left(F\right)\right)$. Let $Sol\left(V, {\sf C}, W\right)$ denote the set of its solutions.
By Theorem \ref{weak}, there is a pair $\phi, \psi \in F$ which is a weak tournament pair on $Com^{o}(f_{max}) \cap E\left( R \right)$.

Until the end of this section, the pair $\phi, \psi$ will be fixed.
For any $f,g:V\rightarrow A$, define $\phi\left(f,g\right),\psi\left(f,g\right):V\rightarrow A$ such that $\phi\left(f,g\right)(v) = \phi\left(f(v),g(v)\right)$, $\psi\left(f,g\right)(v) = \psi\left(f(v),g(v)\right)$. Since any constraint relation of ${\sf C}$ is preserved by $\phi, \psi$, we have $\phi\left(f,g\right),\psi\left(f,g\right)\in Sol\left(V, {\sf C}, W\right)$ for $f,g\in Sol\left(V, {\sf C}, W\right)$.

\begin{lemma}
\label{multimorpism}
For any measure $M(f) = \sum\limits_{v\in V} \sum\limits_{r\in R}{w_{vr}r\left( f\left( v \right) \right)}$ we have
\[
M(\phi\left(f,g\right)) + M(\psi\left(f,g\right)) \leq M(f) + M(g)
\]
\end{lemma}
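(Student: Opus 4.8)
The plan is to reduce the displayed inequality to a separate, per-variable estimate and then dispatch that estimate through the defining cases of the weak tournament pair $\phi,\psi$ on $O := Com^{o}(f_{max})\cap E\left(R\right)$. Since the measure splits over variables, I would write $m_v(a) = \sum\limits_{r\in R} w_{vr}\, r(a)$, so that $M(h) = \sum\limits_{v\in V} m_v(h(v))$ for every $h:V\to A$. Because $\phi\left(f,g\right)(v) = \phi(f(v),g(v))$ and $\psi\left(f,g\right)(v) = \psi(f(v),g(v))$, summing over $v$ shows it suffices to prove, for each $v\in V$ and for $a := f(v)$, $b := g(v)$, the local inequality
\[
m_v(\phi(a,b)) + m_v(\psi(a,b)) \leq m_v(a) + m_v(b).
\]

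First I would invoke conservativity of $F$: both $\phi(a,b)$ and $\psi(a,b)$ lie in $\{a,b\}$. If $a=b$ the inequality is a trivial equality, so assume $a\neq b$. Then the left-hand side equals $m_v(a)+m_v(b)$ exactly when $\phi(a,b)\neq\psi(a,b)$, and in that case we are done; the only remaining possibility is $\phi(a,b)=\psi(a,b)=c$ for some $c\in\{a,b\}$, and, writing $c'$ for the other element, one must establish $m_v(c)\leq m_v(c')$.

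The bridge between $R$ and the numbers $m_v$ is the observation that if $(c,c')\notin E\left(R\right)$, then $r(c)\leq r(c')$ for every $r\in R$, whence $m_v(c)=\sum\limits_{r\in R} w_{vr}\, r(c)\leq \sum\limits_{r\in R} w_{vr}\, r(c') = m_v(c')$, using that all weights $w_{vr}$ are nonnegative. So it is enough to show that whenever $\phi$ and $\psi$ produce the same value $c$ on $\{a,b\}$, the ordered pair $(c,c')$ is absent from $E\left(R\right)$. I would verify this by running through the three clauses of the definition of the weak tournament pair. Clauses (1) and (3), together with the first two alternatives of clause (2), always force $\phi|_{\{a,b\}}$ and $\psi|_{\{a,b\}}$ to take distinct values (two commutative operations with different outputs, or the two projections), so a coincidence of outputs can occur only under the third alternative $\mathop\downarrow\limits_b^a\phi,\mathop\downarrow\limits_b^a\psi$ of clause (2). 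That alternative applies only when exactly one of $(a,b),(b,a)$ lies in $O$; relabeling so that $(a,b)\in O$ and $(b,a)\notin O$, both operations then output $b$, so $c=b$ and $c'=a$. Since $(a,b)\in O$ gives $\{a,b\}\in Com(f_{max})$, hence also $(b,a)\in Com^{o}(f_{max})$, the failure $(b,a)\notin O$ must come from $(b,a)\notin E\left(R\right)$; this is exactly $(c,c')\notin E\left(R\right)$, which completes the estimate.

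The main obstacle I anticipate is this last, bookkeeping step: one must be sure that a coincidence $\phi(a,b)=\psi(a,b)$ is genuinely impossible outside the $\mathop\downarrow\limits_b^a\mathop\downarrow\limits_b^a$ alternative of clause (2), and must correctly extract $(b,a)\notin E\left(R\right)$ from $\{a,b\}\in Com(f_{max})$ combined with $(b,a)\notin O$ (using that $Com^{o}(f_{max})$ is symmetric and that $O$ is its intersection with $E\left(R\right)$). Everything else — the separability of $M$, the use of conservativity, and the monotonicity passage from $E\left(R\right)$ to the values $m_v$ — is routine.
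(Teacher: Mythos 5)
Your proposal is correct and follows essentially the same route as the paper: reduce to a pointwise inequality, use conservativity to see that $\{\phi(a,b),\psi(a,b)\}\subseteq\{a,b\}$, observe that the only case where the outputs coincide is the $\mathop\downarrow\limits_b^a\phi,\mathop\downarrow\limits_b^a\psi$ alternative of clause (2) of the weak tournament pair, and there extract $(b,a)\notin E\left(R\right)$, hence $r(b)\le r(a)$ for all $r\in R$, which gives the estimate after weighting and summing. The paper organizes the case split directly by membership of $(a,b),(b,a)$ in $Com^{o}(f_{max})\cap E\left(R\right)$ rather than by whether the outputs coincide, but the content is identical.
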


\begin{proof}
Let us prove that for any $a,b\in A$ and any $r\in R$: $r\left(a\right)+r\left(b\right) \geq r\left(\phi\left(a,b\right)\right)+r\left(\psi\left(a,b\right)\right)$.

If $\left( {a,b} \right),\left( {b,a} \right) \in Com(f_{max})\cap E\left( R \right)$ or $\left\{ {a,b} \right\} \notin Com(f_{max})$,
then $\left\{\phi\left(a,b\right), \psi\left(a,b\right)\right\} = \left\{a, b\right\}$, and we have an equality.
If $\left( {a,b} \right)\in Com(f_{max}) \cap E\left( R \right),\left( {b,a} \right) \notin E\left( R \right)$,
then $\left\{\phi\left(a,b\right), \psi\left(a,b\right)\right\} = \left\{a, b\right\}$ or $\left\{\phi\left(a,b\right), \psi\left(a,b\right)\right\} = \left\{b\right\}$. In the first case we have an equality and in the second case we have:
\[
\forall\,\,r\in R,\,\,r\left(a\right)+r\left(b\right) \geq 2r\left(b\right)\Leftrightarrow \forall\,\,r\in C\,\,r\left(b\right) \leq r\left(a\right) \Leftrightarrow (b,a)\notin E\left( R \right)
\]
From this inequality, we conclude that for any measure $M(f) = \sum\limits_{v\in V}^n \sum\limits_{r\in R}{w_{vr}r\left( f\left( v \right) \right)}$ we have
$
M(f) + M(g) \geq M(\phi\left(f,g\right)) + M(\psi\left(f,g\right))
$
\end{proof}

\begin{lemma}
\label{delete}
Suppose $A_v = \left\{f(v)| f\in Sol\left(V, {\sf C}, W\right)\right\}$ and there is $a,b\in A_v$ such that $\mathop \downarrow \limits_b^a \psi$, $\mathop \downarrow \limits_b^a \phi$. Then we have
\[
\min \limits_{f\in Sol\left(V, {\sf C}, W\right), f(v)=b} {M(f)} \leq \min \limits_{f\in Sol\left(V, {\sf C}, W\right), f(v)=a} {M(f)}
\]
\end{lemma}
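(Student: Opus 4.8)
The plan is to exploit the submodularity-type inequality of Lemma~\ref{multimorpism} together with the hypothesis that both $\phi$ and $\psi$ collapse the pair $\{a,b\}$ down to $b$. Write $\alpha = \min\limits_{f\in Sol\left(V, {\sf C}, W\right),\, f(v)=a} M(f)$ and $\beta = \min\limits_{f\in Sol\left(V, {\sf C}, W\right),\, f(v)=b} M(f)$; both minima are well defined because the assumption $a,b\in A_v$ guarantees that each of the two feasible sets is nonempty. The goal is precisely the inequality $\beta \leq \alpha$.

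First I would fix solutions $f,g\in Sol\left(V, {\sf C}, W\right)$ attaining these minima, so that $f(v)=a$ with $M(f)=\alpha$ and $g(v)=b$ with $M(g)=\beta$. Then I form $\phi\left(f,g\right)$ and $\psi\left(f,g\right)$, which again lie in $Sol\left(V, {\sf C}, W\right)$ since $\phi,\psi$ preserve every constraint relation (as recorded just before the lemma). The decisive observation is how these new solutions behave at the coordinate $v$: because $\mathop  \downarrow \limits_b^a \phi$ and $\mathop  \downarrow \limits_b^a \psi$, the definition of $\mathop  \downarrow \limits$ gives $\phi\left(f,g\right)(v) = \phi\left(a,b\right) = b$ and $\psi\left(f,g\right)(v) = \psi\left(a,b\right) = b$. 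Hence both $\phi\left(f,g\right)$ and $\psi\left(f,g\right)$ are feasible solutions taking the value $b$ at $v$, so by minimality each has measure at least $\beta$.

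The conclusion then follows by a one-line arithmetic step. On the one hand $M(\phi\left(f,g\right)) \geq \beta$ and $M(\psi\left(f,g\right)) \geq \beta$, so their sum is at least $2\beta$; on the other hand Lemma~\ref{multimorpism} yields $M(\phi\left(f,g\right)) + M(\psi\left(f,g\right)) \leq M(f) + M(g) = \alpha + \beta$. Chaining these two bounds gives $2\beta \leq \alpha + \beta$, that is $\beta \leq \alpha$, which is exactly the claimed inequality. Since the whole argument is a direct application of Lemma~\ref{multimorpism}, there is no genuine obstacle; the only point requiring attention is that the hypotheses $\mathop  \downarrow \limits_b^a \phi$ and $\mathop  \downarrow \limits_b^a \psi$ are what let us read off $\phi\left(a,b\right) = \psi\left(a,b\right) = b$ literally, forcing both averaged solutions onto the value $b$ at $v$ and thereby making the minimality comparison with $\beta$ legitimate.
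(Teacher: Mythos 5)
Your proposal is correct and follows essentially the same route as the paper's own proof: pick optimal solutions $f^{*}$ with $f^{*}(v)=a$ and $f^{**}$ with $f^{**}(v)=b$, observe that $\mathop \downarrow \limits_b^a \phi$ and $\mathop \downarrow \limits_b^a \psi$ force both $\phi\left(f^{*},f^{**}\right)$ and $\psi\left(f^{*},f^{**}\right)$ to take the value $b$ at $v$, and combine the resulting lower bound $2M(f^{**})$ with the upper bound from Lemma~\ref{multimorpism} to get $2M(f^{**}) \leq M(f^{*}) + M(f^{**})$. No issues.
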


\begin{proof}
Suppose that $f^{*}\in Sol\left(V, {\sf C}, W\right), f^{*}(v)=a$ and $f^{**}\in Sol\left(V, {\sf C}, W\right), f^{**}(v)=b$ such that
\[
M(f^{*}) = \min \limits_{f\in Sol\left(V, {\sf C}, W\right), f(v)=a} {M(f)},
M(f^{**}) = \min \limits_{f\in Sol\left(V, {\sf C}, W\right), f(v)=b} {M(f)}
\]
Since $\phi\left(f^{*},f^{**}\right)(v) = b$, $\psi\left(f^{*},f^{**}\right)(v) = b$, then by Lemma \ref{multimorpism}

$2M(f^{**}) \leq M(\phi\left(f^{*},f^{**}\right)) + M(\psi\left(f^{*},f^{**}\right)) \leq M(f^{*}) + M(f^{**})$
\qed
\end{proof}

Now we are ready to describe polynomial-time algorithm for $MinHom\left(Inv\left(F\right)\right)$, which conceptually follows the proof of Theorem 8.3 in \cite{cohen}. First we compute sets  $A_v = \left\{f(v)| f\in Sol\left(V, {\sf C}, W\right)\right\}$ for every $v\in V$. This could be done polynomial-time, because, by result of \cite{bulatov}, $CSP\left(Inv\left(F\right)\right)$ is polynomial-time solvable by 3-consistency algorithm: adding a constraint $(v,\{a\})$ to $\sf{C}$ we can find whether $a\in A_v$ or not.
After this operation, we iteratively delete all elements $a$ from every $A_v$ such that there is $b\in A_v: \mathop \downarrow \limits_b^a \psi, \mathop \downarrow \limits_b^a \phi$. This operation will not increase a minimum of optimized functional due to Lemma \ref{delete}.
Afterwards, we update every constraint pair $\left(\left(v_1,\dots, v_m\right), \rho\right)\in \sf{C}$ by defining $\rho = \rho \cap A_{v_1}\times \dots \times A_{v_m}$.

Consider a collection of sets ${\sf A} = \left\{A_v| v\in V\right\}$. Obviously, $|{\sf A}|< 2^{|A|}$. For a pair $\left(\left(v_1,\dots, v_m\right), \rho\right)\in\sf{C}$, a predicate $\rho$ can be considered as multi-sorted predicate over ${\sf A}$ with signature $\left(A_{v_1},\dots, A_{v_m}\right)$. Moreover, for any $p\in F$ this multi-sorted predicate preserves a polymorphism $\left\{p^{B}| B\in \sf{A}\right\}$ where $p^B = p|_B$.

By Theorem \ref{arithmetical}, $F$ contains an operation $m$ which is arithmetical on $\overline {Com(f_{max})}$.
Consider multi-sorted polymorphisms $\left\{\phi^{B}| B\in \sf{A}\right\}, \left\{\psi^{B}| B\in \sf{A}\right\}, \left\{m^{B}| B\in \sf{A}\right\}$. It is easy to see that this polymorphisms satisfy conditions of Theorem \ref{multi} and the set of multi-sorted predicates $\left\{\rho| \left(s, \rho\right)\in \sf{C}\right\}$ is tractable. Therefore, we can polynomially solve our problem.

\section{Directions for further work}
As it was said in the introduction, $MinHom_R$ fits the framework of Valued CSP ($VCSP$) \cite{cohen}. The following subproblems of $VCSP$ generalize $MinHom$ and are not investigated yet.

1. What happens with $MinHom_R$ when $UG\left( R \right)$ is not complete. We believe that Theorem \ref{main} is true for this case too.

2. Suppose we have a finite valued constraint language $\Gamma$, i.e. a set of valued predicates over some finite domain set. If $\Gamma$ contains all unary valued predicates, we call $VCSP(\Gamma)$ a conservative $VCSP$. This name is motivated by the fact that in this case the multimorphisms (which is a generalization of polymorphisms for valued constraint languages \cite{cohen}) of $\Gamma$ must consist of conservative functions. Since there is a well-known dichotomy for conservative CSPs \cite{bulatov}, we suspect that there is a dichotomy for conservative $VCSPs$.

%
%

%
\end{document}